\documentclass[letterpaper,twocolumn,10pt]{article}
\usepackage{usenix}

\usepackage{tikz}
\usepackage{amsmath}

\usepackage{makeidx} 
\usepackage{multirow}
\usepackage{graphicx}
\usepackage{amsmath}
\usepackage{amssymb}
\usepackage{amsfonts}

\usepackage{amsthm}
\usepackage{mathtools}

\newtheorem{theorem}{Theorem}

\newtheorem{definition}{Definition}

\def\BibTeX{{\rm B\kern-.05em{\sc i\kern-.025em b}\kern-.08emT\kern-.1667em\lower.7ex\hbox{E}\kern-.125emX}}
    
\usepackage{nicefrac}
\usepackage{siunitx}
\usepackage{array,framed}
\usepackage{booktabs}
\usepackage{
  color,
  float,
  epsfig,
  wrapfig,
  graphics,
  graphicx,
  subcaption
}
\usepackage{textcomp}
\usepackage{setspace}
\usepackage{latexsym,fancyhdr,url}
\usepackage{enumerate}
\usepackage{graphics}
\usepackage{xparse} 
\usepackage{xspace}
\usepackage{multirow}
\usepackage{csvsimple}
\usepackage{balance}

\usepackage{
  tikz,
  pgfplots,
  pgfplotstable
}
\usepackage{hyperref}

\usetikzlibrary{
  shapes.geometric,
  arrows,
  external,
  pgfplots.groupplots,
  matrix
}

\pgfplotsset{compat=1.9}

\usepackage{mathtools,}

\DeclareMathAlphabet{\mathcal}{OMS}{cmsy}{m}{n}

\DeclareGraphicsExtensions{%
    .png,.PNG,%
    .pdf,.PDF,%
    .jpg,.mps,.jpeg,.jbig2,.jb2,.JPG,.JPEG,.JBIG2,.JB2}

\usepackage{xparse}
\newcommand{\bnm}{\begin{newmath}}
\newcommand{\enm}{\end{newmath}}

\newcommand{\bea}{\begin{eqnarray*}}%
\newcommand{\eea}{\end{eqnarray*}}%

\newcommand{\bne}{\begin{newequation}}
\newcommand{\ene}{\end{newequation}}

\newcommand{\bal}{\begin{newalign}}
\newcommand{\eal}{\end{newalign}}

\newenvironment{newalign}{\begin{align}%
\setlength{\abovedisplayskip}{4pt}%
\setlength{\belowdisplayskip}{4pt}%
\setlength{\abovedisplayshortskip}{6pt}%
\setlength{\belowdisplayshortskip}{6pt} }{\end{align}}

\newenvironment{newmath}{\begin{displaymath}%
\setlength{\abovedisplayskip}{4pt}%
\setlength{\belowdisplayskip}{4pt}%
\setlength{\abovedisplayshortskip}{6pt}%
\setlength{\belowdisplayshortskip}{6pt} }{\end{displaymath}}

\newenvironment{newequation}{\begin{equation}%
\setlength{\abovedisplayskip}{4pt}%
\setlength{\belowdisplayskip}{4pt}%
\setlength{\abovedisplayshortskip}{6pt}%
\setlength{\belowdisplayshortskip}{6pt} }{\end{equation}}

\newcounter{ctr}

\newlength{\saveparindent}
\newlength{\saveparskip}
\newcommand{\E}{{\rm I\kern-.3em E}}

\def\negl{\mathsf{negl}}

\def \part {part}

\def \blackslug{\hbox{\hskip 1pt $\square$ \hskip 1pt}}
\def \qed{\quad\blackslug\lower 8.5pt\null\par}

\newcounter{mynote}[section]
\newcommand{\notecolor}{blue}
\newcommand{\thenote}{\thesection.\arabic{mynote}}
\newcommand{\tnote}[1]{\refstepcounter{mynote}{\bf \textcolor{\notecolor}{$\ll$TomR~\thenote: {\sf #1}$\gg$}}}

\newcommand\ignore[1]{}

\newcounter{rcnote}[section]

\newcounter{mrnote}[section]

\newcounter{fknote}[section]

\newcounter{anote}[section]

\DeclareMathSymbol{\mlq}{\mathord}{operators}{``}
\DeclareMathSymbol{\mrq}{\mathord}{operators}{`'}

\newcommand{\rhf}[2]{R_{f, \gamma}}

\DeclareDocumentCommand{\edist}{o o}{
  \ensuremath{
    \IfNoValueTF{#1}{{d}}{{\sf d}(#1,#2)}
  }
}

\newcommand{\olrk}[1]{\ifx\nursymbol#1\else\!\!\mskip4.5mu plus 0.5mu\left(\mskip0.5mu plus0.5mu #1\mskip1.5mu plus0.5mu \right)\fi}

\NewDocumentCommand{\indseq}{ O{1} O{r} }{{#1}\ldots {#2}}

\usepackage{algorithm}      
\usepackage{algpseudocode}  

\usepackage{makecell}

\usepackage{amsthm}
\usepackage{float}

\usepackage{booktabs}

\usepackage{listings}
\usepackage{color}
\allowdisplaybreaks[4] 
\usepackage{autobreak}
\usepackage{graphicx}
\usepackage{lineno}

\usepackage{enumitem}
\usepackage[most]{tcolorbox}

\newcommand{\GF}[1]{\mathbb{F}_{#1}}

\newcommand{\wt}[1]{w_H(#1)\xspace}

\renewcommand{\vec}[1]{\mathbf{#1}\xspace}
\newcommand{\mat}[1]{\mathbf{#1}\xspace}

\newcommand{\txt}[1]{\mathrm{#1}\xspace}
\usepackage{booktabs}

\usepackage[nointegrals]{wasysym}
\newcommand{\Cfill}{\CIRCLE} 
\newcommand{\copen}{\Circle} 
\newcommand{\rc}{\RIGHTcircle}  

\usepackage{cleveref}
\definecolor{deepred}{RGB}{230, 80, 80}
\definecolor{deepgreen}{RGB}{70, 170, 90}

\newcommand{\partialyes}{ \textcolor{deepgreen}{\ding{51}}\textcolor{deepred}{\rotatebox[origin=c]{-7.2}{\kern-0.7em\ding{55}}}\xspace }

\usepackage{tikz}

\newcommand{\KeyGen}{\mathsf{KeyGen}}
\newcommand{\Encode}{\mathsf{Encode}}
\newcommand{\Decode}{\mathsf{Decode}}
\newcommand{\echannel}[0]{\mathcal{E}}

\newcommand{\Encrypt}{\mathsf{Encrypt}}
\newcommand{\Decrypt}{\mathsf{Decrypt}}

\newcommand{\RMRS}{RM-RS }

\usepackage{tikz}
\usetikzlibrary{arrows.meta, positioning}

\usepackage{pifont}

\newcommand{\mypara}[1]{\noindent{\bf {#1}.} \xspace}

\newcommand{\cecinst}[0]{\textsf{Conc-CEC}\xspace}

\usepackage[most]{tcolorbox}

\newcounter{game}
\renewcommand{\thegame}{\arabic{game}}
\usepackage[normalem]{ulem}

\usepackage{mdframed}

\algnewcommand\algorithmicparam{\textbf{Parameter:}}
\algnewcommand\Param{\item[\algorithmicparam]}

\newcommand{\COMMENT}[1]{\hfill$\triangleright$\textit{#1}}

\usepackage{threeparttable}

\begin{document}

\date{}

\title{\Large \bf {Confidential Error-Correcting Codes}}

\author{
{\rm Tianrui Wang\textsuperscript{1}}\ \ \
{\rm Anyu Wang\textsuperscript{2,3,4}$^\dagger$}\ \ \
{\rm Tianshuo Cong\textsuperscript{5,6}$^\dagger$}\ \ \ \\
{\rm Delong Ran\textsuperscript{1}}\ \ \
{\rm Xiaoyun Wang\textsuperscript{2,3,4,7,8}}
\\
\textsuperscript{1}\textit{Institute for Network Sciences and Cyberspace, Tsinghua University} \ \ \ \\
\textsuperscript{2}\textit{Institute for Advanced Study, Tsinghua University} \ \ \ 
\textsuperscript{3}\textit{Zhongguancun Laboratory, Beijing, China} \ \ \ \\
\textsuperscript{4}\textit{State Key Laboratory of Cryptography and Digital Economy Security, Tsinghua University} \ \ \ \\
\textsuperscript{5}\textit{School of Cryptologic Science and Engineering, Shandong University} \ \ \
\\
\textsuperscript{6}\textit{Shandong Key Laboratory of  Artificial Intelligence Security, Shandong University}
\\
\textsuperscript{7}\textit{National Financial Cryptography Research Center, Beijing, China}
\\
\textsuperscript{8}\textit{Shandong Institute of Blockchain, Shandong, China}
}

\maketitle

\begin{abstract}
We introduce Confidential Error-correcting Codes (CECs), a new cryptographic primitive that enables reliable decoding from perturbed codewords while preserving message confidentiality.
We formalize this primitive through three security properties: confidentiality, robustness, and soundness.
To realize this goal, we develop a general framework for constructing CECs based on a Randomness-Security Separation (RSS) structure, which first encodes a nonce to introduce randomness and then masks the encoded payload with a pseudorandom function.
Based on our RSS framework, we instantiate CECs, called Conc-CEC, using concatenated Reed-Muller and Reed-Solomon codes.
The resulting Conc-CEC admits provable guarantees for all three properties.
Notably, compared to PRCs (CRYPTO'24), another recent cryptographic primitive with error-correcting capability, CECs achieve significantly stronger concrete security guarantees under practical parameter settings.
As a primary application of CECs, we further present CECMark, a watermarking scheme for AI-generated images.
Our experiments on real-world generative image models show that CECMark consistently outperforms state-of-the-art watermarking methods, including Gaussian Shading (CVPR'24), PRCMark (ICLR'25), and T2SMark (NeurIPS'25).
In particular, CECMark achieves higher robustness across all 56 evaluated perturbations and ensures a false positive rate below $10^{-5}$ without introducing side effects.
Taken together, our work lays a novel cryptographic foundation for trustworthy watermarking schemes that, for the first time, unify provable security guarantees with strong practical reliability.
\end{abstract}

\section{Introduction}
\label{sect:introduction}
The growth of AI-Generated Content (AIGC) raises misuse concerns~\cite{ALBUSAIDI2024100630,Lucchi_2024}, making reliable source attribution increasingly important.
Against this backdrop, reliable attribution mechanisms have become essential technical building blocks for effective AI governance. 
For example, the EU AI Act~\cite{euaiact} introduces transparency obligations for AI-generated content, which will become applicable from August 2026 and require certain AI-generated outputs to be marked in a machine-readable and detectable manner.
Among the candidate techniques, watermarking~\cite{sok_zhao_sp25} is especially appealing because it allows generated content itself to carry provenance-related information while preserving content quality.


However, watermarking schemes face multiple security threats, such as unauthorized extraction of embedded information, malicious perturbation of watermarked content, and attribution errors.
Therefore, beyond the common utility requirements (e.g., fidelity), these threats lead to three core security requirements: Confidentiality, Robustness, and Soundness.
Despite their importance,
existing watermarking schemes still exhibit two fundamental
limitations:
\begin{itemize}

\item \textbf{Neglected Confidentiality.} Robustness and soundness have been widely studied in prior watermarking literature~\cite{christ2024pseudorandom}, whereas confidentiality has received considerably less attention despite its growing importance in provenance-oriented watermarking systems.

\item \textbf{Lack of Unified Protection.}
Satisfying all three requirements simultaneously is challenging because of the inherent tension between cryptographic randomness and structured redundancy. 
As a result, SOTA schemes typically optimize only a subset of these properties (see~\Cref{tab:sumprop}), and no practical scheme currently achieves all three simultaneously.

\end{itemize}



\noindent{\bf{Why Confidentiality Matters?}}
The Coalition for Content Provenance and Authenticity (C2PA)~\cite{c2pa} emphasizes embedding multi-bit provenance metadata for AIGC authentication and verification.
Major AI providers, including OpenAI~\cite{openai_c2pa}, Claude~\cite{claude}, 
Azure~\cite{azure_content_credentials}, and Google~\cite{google_synthid},
have adopted mechanisms to embed provenance information into AIGCs.
Specifically, the C2PA specification~\cite{c2pa} requires provenance information describing who created or modified an asset and how it was changed.
Such embedded information may include timestamps, user IDs, digital signatures, or model specifications~\cite{DBLP:conf/iclr/GunnZS25}. 
While these attributes are essential for enabling fine-grained provenance and accountability, they may also contain information that should not be accessible to unauthorized parties.
For example, user IDs may compromise user privacy and ownership anonymity, while model specifications may enable adversaries to identify and exploit vulnerabilities in specific generation models.
This motivates confidentiality as a fundamental requirement for multi-bit watermarking, requiring that an adversary should learn nothing about the embedded payload from the watermarked content.

\noindent{\bf{Why Existing Error-Tolerant Cryptographic Schemes Are Insufficient?}}
A natural question is whether existing cryptographic schemes with error-tolerant capability can already support watermarking.
Unfortunately, they typically do not directly fit this setting.
Code- and lattice-based public-key schemes, such as BIKE~\cite{aragon2022bike},
HQC~\cite{gaborit2025hamming}, and Kyber~\cite{bos2018crystals},
introduce noise primarily for security rather than robustness,
while conventional combinations of symmetric encryption and error-correcting codes
do not jointly provide IND-CPA randomness and perturbation robustness.
A detailed analysis is provided in Appendix~\ref{sec:relatedwork_error}.
A particularly promising cryptographic primitive is Pseudorandom Error-correcting Codes
(PRCs)~\cite{christ2024pseudorandom}, introduced at CRYPTO 2024. 
Nevertheless, existing PRC constructions~\cite{christ2024pseudorandom,DBLP:conf/iclr/GunnZS25,ghentiyala2024new,christ2025improved} remain difficult to instantiate securely under practical parameters.
For example, \cite{wang2025cryptanalysis} shows that part of the secret key of the LDPC-based PRC construction in~\cite{christ2024pseudorandom} can be recovered, thereby compromising the confidentiality of the embedded watermark.
We will present a comprehensive analysis of the limitations of all currently known PRC instances in Appendix~\ref{sect:inst_prc}.

\subsection{Our Contributions}

To bridge these gaps, we introduce \emph{Confidential Error-Correcting Codes (CECs)}, a new cryptographic primitive designed to capture message confidentiality, perturbation robustness, and sound decoding within a unified coding framework.
\Cref{fig:paper_org} summarizes the organization of our paper.


\mypara{Definition of CECs}
CECs are defined to satisfy three properties: 
(i) \textit{Confidentiality} requires that any probabilistic polynomial-time adversary has only a negligible advantage in the IND-CPA game, ensuring that the codewords reveal no information about the embedded plaintext information. 
(ii) \textit{Robustness} corresponds to the error-correcting capability, guaranteeing correct decryption even when the ciphertext is subject to bounded perturbation. 
(iii) \textit{Soundness} requires that a random vector be successfully decoded to a valid message only with negligible probability.
The formal definition of CECs will be provided in~\Cref{sec:def_cec}.

\mypara{Construction of CECs}
To systematically generate the codewords of CECs, we develop a generic construction framework based on a \textit{Randomness-Security Separation (RSS)} structure (see~\Cref{fig:high_level}).
This design creates a clean separation of roles: Confidentiality is achieved solely from the security of Pseudorandom Functions (PRFs), without requiring any cryptographic indistinguishability property from the underlying error-correcting codes.
That said, such a separation allows the error-correcting component to focus on Robustness and Soundness, while security is ensured independently.

\begin{figure}[t]
\centering
\includegraphics[width=\linewidth]{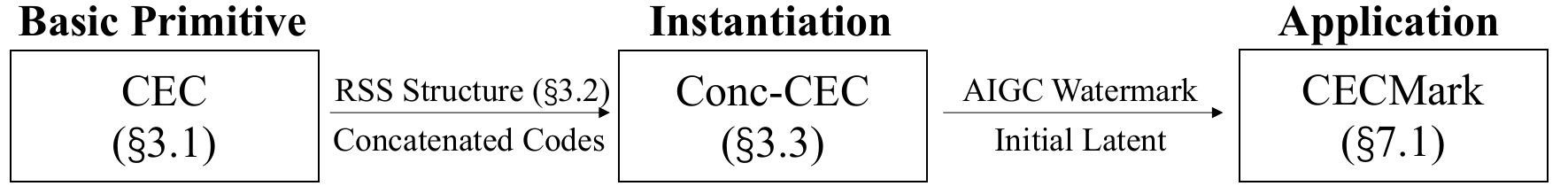}
\caption{Overview of the paper organization. 
}
\label{fig:paper_org}
\end{figure}

\mypara{Instantiation of CECs}
In~\Cref{subsec:conc_cec}, based on the RSS structure, we present \cecinst, which instantiates the error-correcting component using concatenated Reed-Muller and Reed-Solomon (RM and RS) codes. 
This concatenated design provides strong Robustness, supported by a layered Decoding Failure Rate (DFR) analysis in theory.
Meanwhile, we will compare this with duplicated codes in~\Cref{subsec:dfr_dup_rmrs} to highlight the advantage of our design choice.
Moreover, the restricted decoding region reduces the probability of false acceptance, thereby contributing to Soundness.
Above all, \cecinst admits provable security for the following reasons.

\begin{itemize}

\item  \textbf{Provable} \textbf{Confidentiality:} Since the encoded payload is protected only through PRF-based masking, confidentiality directly reduces to the security of the PRF, independent of the error-correcting code. 
\Cref{sect:confidentiality} will provide a detailed theoretical analysis of the provable confidentiality by using the IND-CPA game (Game~\ref{game:indcpa}).


\item \textbf{Provable Robustness:}
Robustness is delegated to the error-correcting component, which ensures reliable recovery under perturbations and noise. In~\Cref{sect:robustness}, we will analyze robustness by deriving the DFR of the concatenated code under a given noise rate. 
The overall DFR is characterized through a layered analysis of the RM and RS components.

\item\textbf{Provable Soundness:}
Soundness is ensured by the bounded decoding radius of the code, which restricts decoding to a small neighborhood of valid codewords, making it unlikely for arbitrary inputs to be mapped to valid messages. The formal proof is given in~\Cref{sect:soundness}.

\end{itemize}

\mypara{CEC-based Watermarking Schemes}
In~\Cref{sect:experiments}, we propose CECMark, a novel watermarking scheme for generative image models based on CECs. 
To the best of our knowledge, CECMark is the first watermarking scheme to unify provable security guarantees with strong practical reliability.
CECMark embeds the codewords of \cecinst into the signs of latent of generative models restricted to those with large absolute values, enabling both security guarantees and reliable detection.
Theoretically, CECMark achieves a $2^{64}$-level randomness and a $2^{256}$-level confidentiality by selecting a $64$-bit nonce and instantiating the underlying PRF with $256$-bit security.
This provides substantially stronger provable security than GS, PRCMark, and T2SMark, whose confidentiality levels are all below $2^{28}$ (\Cref{sec:conf_cecmark}).
For soundness, CECMark maintains a false positive rate below $10^{-5}$ without introducing side effects (\Cref{sec:sound_cecmark}).
We further empirically evaluate CECMark on Stable Diffusion 2.1~\cite{Rombach_2022_CVPR}.
\Cref{sec:filde_cecmark} demonstrates that CECMark preserves high image quality.
For robustness, \Cref{sec:robust_cecmark} shows that CECMark consistently achieves higher detection accuracy than prior watermark schemes across all $56$ evaluated image perturbations.
These results demonstrate the practical applicability of CECs in trustworthy watermarking systems.
More generally, CECs provide a principled foundation for developing cryptographically grounded and reliably deployable watermarking solutions.

\section{Background}
\label{sect:preliminary}
\subsection{Basic Cryptographic Definitions}

We review the cryptographic definitions used by CECs, beginning with PRFs~\cite{bogdanov2017pseudorandom} in Definition~\ref{def:prf}. 


\begin{definition}[Pseudorandom Function]
\label{def:prf}
A pseudorandom function (PRF) is a function $F$ defined over a keyspace $\mathcal{K}$, input space $\mathcal{X}$, and output space $\mathcal{Y}$:
$F: \mathcal{K} \times \mathcal{X} \rightarrow \mathcal{Y}$.
We have $\mathcal{K} = \GF{2}^{K(\lambda)}$, $\mathcal{X} = \GF{2}^{I(\lambda)}$, and $\mathcal{Y} = \GF{2}^{O(\lambda)}$, where $\lambda$ is the security parameter and $K(\cdot), I(\cdot), O(\cdot)$ are polynomial functions of $\lambda$.
The function $F$ is a PRF if it satisfies the following properties:
\begin{enumerate}
    \item \textbf{Efficient Computability:} For any $k \in \mathcal{K}$ and $x \in \mathcal{X}$, the value $F(k,x)$ can be computed in polynomial time.
    \item \textbf{Pseudorandomness:} It is computationally infeasible to distinguish $F(k, \cdot)$ from a truly random function $R: \mathcal{X} \rightarrow \mathcal{Y}$. 
    Formally, for any probabilistic polynomial-time adversary 
$\mathcal{A} = (\mathcal{A}_{\mathsf{GUESS}})$, where 
$\mathcal{A}_{\mathsf{GUESS}}^{\mathcal{O}}$ is given oracle access to $\mathcal{O}$ 
and can make an adaptive polynomial number of queries to $\mathcal{O}$, we require the advantage
\[
\mathrm{Adv}_{\mathcal{A}}^{\mathsf{PRF}}(\lambda) := \left| \Pr[\mathsf{Game}^{\mathsf{PRF}}_{\mathcal{A}}(\lambda)=1]-\frac{1}{2} \right|,
    \]
    is negligible in $\lambda$, where the game is defined in Game~\ref{game:prf}.
\end{enumerate}
\end{definition}

\refstepcounter{game}
\begin{tcolorbox}[
breakable,
enhanced jigsaw,
colback=white,
colframe=black,
coltitle=black,
colbacktitle=white,
title={Game~\thegame~$\mathsf{Game}^{\mathsf{PRF}}_{\mathcal{A}}(\lambda)$},
fonttitle=\bfseries,
boxrule=0.5pt,
boxsep=0.5pt,
top=1pt,
bottom=1pt
]
\label{game:prf}
\begin{enumerate}[leftmargin=1.2em,label=\arabic*.,nosep]
    \item $b \xleftarrow{\$} \{0,1\}$
    \item $k \xleftarrow{\$} \mathcal{K}, \quad 
    R \xleftarrow{\$} \{ \mathcal{X} \to \mathcal{Y} \}$
    
    \item $\mathcal{O}(\cdot) := 
    \begin{cases}
        F(k,\cdot) & \text{if } b = 0, \\
        R(\cdot)   & \text{if } b = 1
    \end{cases}$
    
    \item $b' \leftarrow \mathcal{A}_{\mathsf{GUESS}}^{\mathcal{O}}(1^\lambda)$
    
    \item return $(b' = b)$
\end{enumerate}

\end{tcolorbox}

Proving the existence of PRFs remains a fundamental challenge as it implies $P \ne NP$. Nevertheless, PRFs are widely trusted and can be constructed from one-way functions or Pseudorandom Generators (PRGs), such as the NIST-specified DRBG~\cite{barker2007recommendation} instantiated with AES~\cite{daemen2002design} or HMAC~\cite{krawczyk1997hmac}. Furthermore, stream ciphers such as RC4~\cite{jindal2015rc4}, ISAAC~\cite{shafiee2016isaac}, and ChaCha20~\cite{bernstein2008chacha} can also be converted into PRFs. 
We instantiate the PRF with ChaCha20~\cite{bernstein2008chacha} for its security and broad adoption. 
We next define IND-CPA security in Definition~\ref{def:indcpa}.

\begin{definition}[IND-CPA Security]
\label{def:indcpa}
Let $(\KeyGen, \Encrypt,$ $ \Decrypt)$ be an encryption scheme.
We say it is IND-CPA secure if for any probabilistic polynomial-time adversary 
$\mathcal{A} = (\mathcal{A}_{\mathsf{CHOOSE}}, \mathcal{A}_{\mathsf{GUESS}})$, where 
$\mathcal{A}_{\mathsf{CHOOSE}}^{\mathcal{O}}$ and $\mathcal{A}_{\mathsf{GUESS}}^{\mathcal{O}}$ 
are given oracle access to $\mathcal{O}$ and can make an adaptive polynomial number of queries to $\mathcal{O}$, the advantage
\[
\mathrm{Adv}_{\mathcal{A}}^{\mathsf{IND\text{-}CPA}}(\lambda)
:= \left| \Pr[\mathsf{Game}^{\mathsf{IND\text{-}CPA}}_{\mathcal{A}}(\lambda)=1] - \frac{1}{2} \right|
\]
is negligible in $\lambda$, where the game is defined in Game~\ref{game:indcpa}.
\end{definition}

\refstepcounter{game}
\begin{tcolorbox}[
breakable,
enhanced jigsaw,
colback=white,
colframe=black,
coltitle=black,
colbacktitle=white,
title={Game~\thegame~$\mathsf{Game}^{\mathsf{IND\text{-}CPA}}_{\mathcal{A}}(\lambda)$},
fonttitle=\bfseries,
boxrule=0.5pt,
boxsep=0.5pt,
top=1pt,
bottom=1pt
]
\label{game:indcpa}

\begin{enumerate}[leftmargin=1.2em,label=\arabic*.,nosep]
    \item $\vec{sk} \xleftarrow{\$} \GF{2}^{\lambda_1}$
    
    \item $\mathcal{O}(\cdot) := \Encrypt(1^\lambda,\vec{sk},\cdot)$
    
    \item $(\vec{m}_0,\vec{m}_1) \leftarrow 
    \mathcal{A}_{\mathsf{CHOOSE}}^{\mathcal{O}}(1^\lambda)$
    
    \item $b \xleftarrow{\$} \{0,1\}$
    
    \item $\vec{c}^* \leftarrow \Encrypt(1^\lambda,\vec{sk},\vec{m}_b)$
    
    \item $b' \leftarrow 
    \mathcal{A}_{\mathsf{GUESS}}^{\mathcal{O}}(1^\lambda,\vec{c}^*)$
    
    \item return $(b' = b)$
\end{enumerate}

\end{tcolorbox}

\begin{table*}[t]
\centering
\begin{threeparttable}
\caption{Comparison of SOTA watermarking schemes with respect to the provable security properties. \Cfill~indicates that the property is satisfied, \copen~denotes that it is not achieved, and \rc~signifies that the scheme is provably secure only in theory, but remains insecure in practical deployments.}
\label{tab:sumprop}
\centering
\begin{tabular*}{0.90\textwidth}{@{\extracolsep{\fill}}cccc}
\toprule
\multirow{2}{*}{\textbf{Watermarking Schemes}} &  \multicolumn{3}{c}{\textbf{\textit{Provable} Security Properties}}  \\
\cmidrule(lr){2-4}
& \textbf{Confidentiality} & \textbf{Robustness} & \textbf{Soundness}   \\ \midrule
Gaussian Shading (GS)~\cite{yang2024gaussian}  & \copen & \Cfill & \rc \\
PRCMark~\cite{DBLP:conf/iclr/GunnZS25} & \rc & \Cfill & \Cfill \\ 
T2SMark~\cite{yang2025t2smark} & \rc & \Cfill & \Cfill \\ \midrule
CECMark (\textbf{Ours}) & \Cfill\tnote{(a)} & \Cfill\tnote{(b)} & \Cfill\tnote{(c)} \\
\bottomrule
\end{tabular*}

\begin{tablenotes}
\footnotesize
\item[(a)] Confidentiality is theoretically established but remains practically vulnerable in PRCMark and T2SMark (see \Cref{sect:confidentiality,sec:conf_cecmark}).
\item[(b)] Robustness follows the order CECMark $>$ PRCMark $>$ T2SMark $\approx$ GS (see \Cref{sect:robustness,sec:robust_cecmark}).
\item[(c)] Soundness is inherently ensured in CECMark, whereas PRCMark and T2SMark incur additional side effects (see \Cref{sect:soundness,sec:sound_cecmark}).
\end{tablenotes}

\end{threeparttable}
\end{table*}

\subsection{Coding Theory}
\mypara{Basic Notations}
Let $\GF{q}$ denote a finite field with $q$ elements.
For a vector $\vec{x} \in \GF{q}^n$, let $\wt{\vec{x}}$ denote its Hamming weight.
For two vectors $\vec{x}, \vec{y} \in \GF{q}^n$, the Hamming distance between them is $d(\vec{x}, \vec{y}) = \wt{\vec{x} - \vec{y}}$.
In the following, we introduce some basic concepts related to error-correcting codes.

\begin{definition}[Minimum Distance]
The minimum distance of a code $\mathcal{C}$ is $
d_{\min}(\mathcal{C})=\min \{ d(\vec{x},\vec{y}) : \vec{x},\vec{y}\in \mathcal{C},\ \vec{x}\neq \vec{y}\}$.
\end{definition}

\begin{definition}[Linear Codes]
A $q$-ary linear code $\mathcal{C}$ of length $n$ is a linear subspace of $\GF{q}^n$.
If $\mathcal{C}$ has dimension $k$, then $\mathcal{C}$ is called a $[n,k]$ or $[n,k]_q$ code.
If the minimum distance of $\mathcal{C}$ is $d_{\min}$, we then describe it as a $[n,k,d_{\min}]$ or $[n,k,d_{\min}]_q$ code.
\end{definition}


\begin{definition}[Reed-Muller Codes]
    A Reed-Muller code RM$(r,m)$ with positive integer $m$ and $0\le r \le m$ has parameters
\(
[n,k,d_{\min}] = \bigl[2^m, \sum_{i=0}^{r} \binom{m}{i}, 2^{m-r}\bigr],
\) 
and duplicated RM codes are constructed by repeating each codeword multiple times.
\end{definition}

\begin{definition}[Reed-Solomon Codes]
    Let $p$ be a prime number and $q$ be a power of $p$. 
A Reed-Solomon code
RS[$n$,$k$,$d_{\min}$] with symbols in $\mathbb{F}_q$ has block length $n=q-1$, dimension $k$, and minimum distance $d_{\min}=n-k+1$.
A shortened Reed-Solomon code is obtained from an RS[$n,k,d_{\min}$] code by fixing $s$ message symbols and removing the corresponding $s$ codeword positions. 
The resulting code has parameters RS[$n-s,k-s,d_{\min}$], while preserving the minimum distance.
\end{definition}

Concatenated codes are a powerful class of error-correcting codes that leverage a multi-stage structure to achieve better decoding performance. 
We present the definition of concatenated codes in Definition~\ref{def:concode} and describe their encoding procedure in Appendix~\ref{sect:encode_concatenated}.

\begin{definition}[Concatenated Codes]
\label{def:concode}
A concatenated code over $\GF{2}$ consists of
an external code $\mathcal{C}_e$ with parameters $[n_e,k_e,d_e]$ over $\GF{q}$
and an internal code $\mathcal{C}_i$ with parameters $[n_i,k_i,d_i]$ over $\GF{2}$,
where $q = 2^{k_i}$.
The resulting concatenated code is a binary code with parameters
$[N = n_e n_i,\; K = k_e k_i,\; D \ge d_e d_i]$.
\end{definition}

\subsection{Generative Image Models (GIMs)}
\label{sec:gim}

A Generative Image Model (GIM) is a probabilistic algorithm that takes a textual prompt as input and outputs an image.
We focus on the most widely used latent diffusion-based GIMs, which consist of a de-noising model \texttt{U-net} and an autoencoder pair $(\texttt{E-net}, \texttt{D-net})$.
The generation and inversion processes are illustrated in~\Cref{fig:watermark_flow}.
In the generation process, the model samples an initial latent $\vec{z}_{T}$ from a Gaussian distribution.
The \texttt{U-net} then progressively de-noises this latent into $\vec{z}_{0}$ under the guidance of the textual prompt over $T$ steps.
The decoder \texttt{D-net} finally maps $\vec{z}_{0}$ to the image space and produces the generated image.
In the inversion process, an input image is first encoded by \texttt{E-net} into a proxy latent $\hat{\vec{z}}_{0}$.
The \texttt{U-net} is then used to approximately invert the de-noising trajectory and obtain a proxy initial latent $\hat{\vec{z}}_{T}$.
It allows information embedded in the initial latent $\vec{z}_{T}$ to be approximately recovered from the generated image through generating $\hat{\vec{z}}_{T}$.


\subsection{In-processing Watermarking Schemes}
In-processing watermarking schemes typically leverage the initial latent $\vec{z}_{T}$ to embed watermarks. 
We review the embedding and extraction methods of SOTA schemes, specifically Gaussian Shading~\cite{yang2024gaussian}, PRCMark~\cite{DBLP:conf/iclr/GunnZS25}, and T2SMark~\cite{yang2025t2smark}.
\Cref{tab:sumprop} summarizes their provable security properties and compares them with CECMark.

\mypara{Gaussian Shading (GS)~\cite{yang2024gaussian}}
In the GS scheme, the message is first encoded via duplicated codes and subsequently masked by a vector generated by the secret stream key. 
The resulting vector is then embedded into the signs of the latent.
The extraction process operates inversely: the received vector is first de-masked, and then decoded by the duplicated codes. 
However, due to the lack of randomness, identical messages always produce identical outputs, which fails to provide confidentiality.
Furthermore, the reliance on simple duplicated codes results in limited robustness and soundness.


\mypara{PRCMark~\cite{DBLP:conf/iclr/GunnZS25}} 
PRCMark is a practical instantiation based on the PRC primitive~\cite{christ2024pseudorandom}.
For watermark embedding, it directly adopts the encode algorithm from the LDPC-PRC in~\cite{christ2024pseudorandom} to generate binary codewords, which are then mapped to the signs of the initial latent.
For watermark extraction, it offers a decode function that employs a Belief Propagation (BP) algorithm to recover the message. 
While the scheme requires low LDPC sparsity to maintain robustness, this requirement introduces a vulnerability to polynomial-time key recovery attacks that ultimately compromise confidentiality~\cite{wang2025cryptanalysis}.

\mypara{T2SMark~\cite{yang2025t2smark}}
T2SMark adopts a two-stage design for embedding. 
The first stage of T2SMark involves encrypting an encoded session key with a master key, while the second stage encrypts the encoded message using that session key. 
To ensure robustness, T2SMark utilizes duplicated codes in its encoding process, following the same approach as GS. 
In terms of embedding strategy, while GS and PRCMark directly assign the vector bits to the signs of the latent space, T2SMark introduces an optimization that restricts watermark embedding to positions with high magnitude. 
However, the restricted $16$-bit length of both the session and master keys renders the scheme susceptible to straightforward key recovery. 
Additionally, the use of duplicated codes leads to insufficient robustness and soundness.


\section{Confidential Error-correcting Codes}
\label{sect:specification}
In this section, we propose a new primitive called \emph{Confidential Error-Correcting Codes (CECs)}.  
We first introduce the definition of CECs. 
Next, we propose a construction framework with a randomness-security separation structure.
Finally, we instantiate CECs with concatenated codes.

\subsection{Definition of CECs}
\label{sec:def_cec}
We formalize CECs through three properties: \emph{confidentiality}, \emph{robustness}, and \emph{soundness}.

\begin{definition}[CECs]
\label{def:cec}
Let $\Sigma$ be a fixed alphabet, $\echannel: \Sigma^* \rightarrow \Sigma^*$ be a noisy channel modeled by adding a noise vector with a Bernoulli distribution, and $k(\lambda) \in \mathbb{N}$ be the message length under any fixed $\lambda$. 
A CEC is a triple of polynomial-time randomized algorithms $(\KeyGen, \Encrypt, \Decrypt)$ that satisfies
\begin{itemize}
\item (\textbf{Confidentiality}) For any polynomial-time adversary $\mathcal{A}$, let $\mathrm{Adv}_{\mathcal{A}}^{\mathrm{IND\text{-}CPA}}(\lambda)$ denote the advantage of the IND-CPA game. Then
\[\mathrm{Adv}_{\mathcal{A}}^{\mathrm{IND\text{-}CPA}}(\lambda) \le \negl(\lambda),\]
where the definition of the IND-CPA game is in Game~\ref{game:indcpa}. 
\item (\textbf{Robustness}) For any message $\mathsf{m} \in \Sigma^{k(\lambda)}$, any $\lambda \in \mathbb{N}$, and $\vec{x} \leftarrow \Encrypt(1^\lambda,\mathsf{sk},\mathsf{m})$, 
\[ 
\mathop{\Pr}\limits_{\mathsf{sk} \leftarrow \KeyGen(1^\lambda)}[\mathsf{Decrypt}(1^\lambda,\mathsf{sk},\echannel(\vec{x}))=\mathsf{m}] \ge 1-\negl(\lambda).
\]
\item (\textbf{Soundness}) For any fixed $\vec{c} \in \Sigma^*$,
\[ 
\mathop{\Pr}\limits_{\mathsf{sk} \leftarrow \KeyGen(1^\lambda)}[\Decrypt(1^\lambda, \mathsf{sk}, \vec{c})=\perp ] \ge 1-\negl(\lambda).
\]
\end{itemize}
\end{definition}

Overall, CECs can be viewed as a specialized symmetric-key encryption scheme equipped with error-correcting capability. 
To ensure robustness, this primitive inherently supports perturbations. 
Consequently, unlike traditional symmetric-key encryption schemes, CECs operate over a ciphertext space that is larger than their plaintext space.



\subsection{Construction of CECs}

\mypara{Overview of the RSS Structure}
To construct CECs, we propose a framework featuring a Randomness-Security Separation (RSS) design.
A high-level illustration of the RSS framework is shown in~\Cref{fig:high_level}. 
This architecture streamlines encryption by decoupling it into two pipelines:

\begin{figure}[t]
\centering
\includegraphics[width=0.85\linewidth]{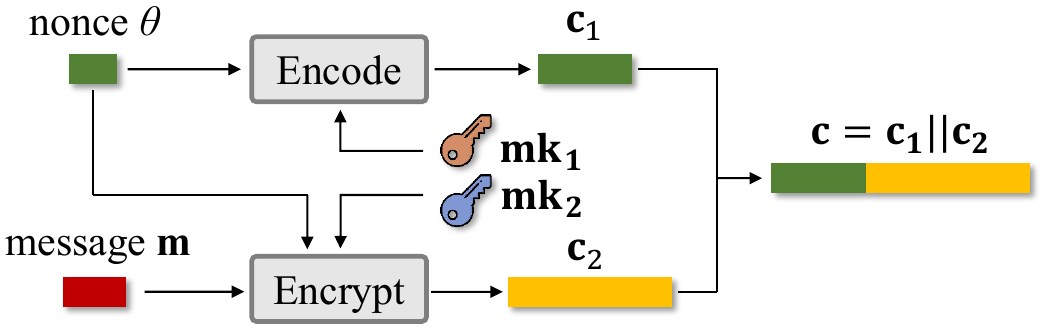}
\caption{A high-level illustration of the Randomness-Security Separation (RSS) design. 
The green pipeline provides randomness, while the yellow pipeline ensures security. 
}
\label{fig:high_level}
\end{figure}

\begin{itemize}
\item \textbf{Randomness Pipeline:} This pipeline processes a random nonce $\theta$ to generate the first ciphertext component $\vec{c}_1$ through an ``\textsf{Encode}'' module, providing the necessary session randomness for the transformation.
Note that although a master key is involved here, it only generates a static mask and does not provide security.

\item \textbf{Security Pipeline:} This branch encrypts the actual message $\vec{m}$ into the second component $\vec{c}_2$ through an ``Encrypt'' module. 
Here, confidentiality is enforced by a mask derived from the master key and the nonce.
This pipeline can meet confidentiality requirements because identical plaintexts yield distinct ciphertexts. 
\end{itemize}

\mypara{Design Benefits of RSS}
RSS offers three advantages: 

\begin{itemize}

\item \textbf{Extensibility:}
Both the \textsf{Encode} and \textsf{Encrypt} modules in RSS can be replaced with different algorithms.
This modularity supports diverse coding schemes, cryptographic primitives, and deployment requirements.
The two modules can also be instantiated independently to meet different performance, security, and application needs.

\item \textbf{Low Overhead: }
To minimize overhead, we utilize a short nonce, while security is ensured by a longer master key. 
This design exploits the gap between adversarial capacity and practical system constraints. 
Although an adversary may possess substantial computational power (e.g., $2^{64}$ operations), the number of encryption queries in real-world watermarking schemes is much smaller. 
By decoupling the randomness space from the security level, we reduce ciphertext overhead without compromising security against realistic adversaries.

\item \textbf{Functional Separation:} By separating nonce-derived randomness in $\vec{c}_1$ from the key-dependent protection in $\vec{c}_2$, the RSS design ensures that the overall security is strictly ensured by the master key.
This separation allows the choice of codes to focus on error-correcting capability without being constrained by security considerations.

\end{itemize}

\begin{figure*}[t]
\centering
\begin{subfigure}{0.48\linewidth}
\centering
\includegraphics[width=0.9\linewidth]{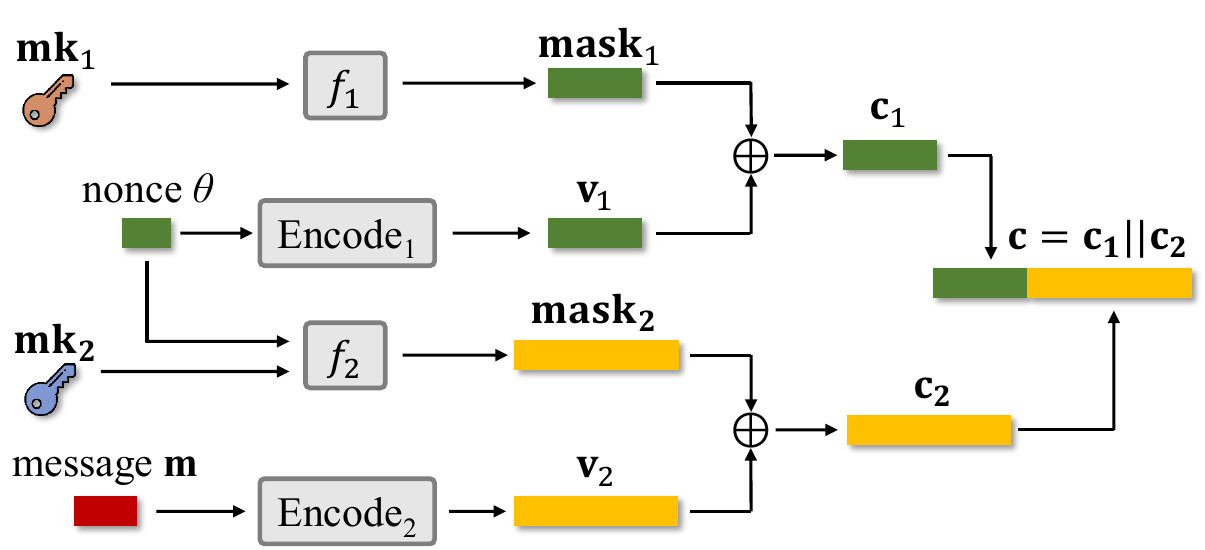}
\caption{Encryption Process.}
\end{subfigure}
\hfill
\begin{subfigure}{0.48\linewidth}
\centering
\includegraphics[width=0.9\linewidth]{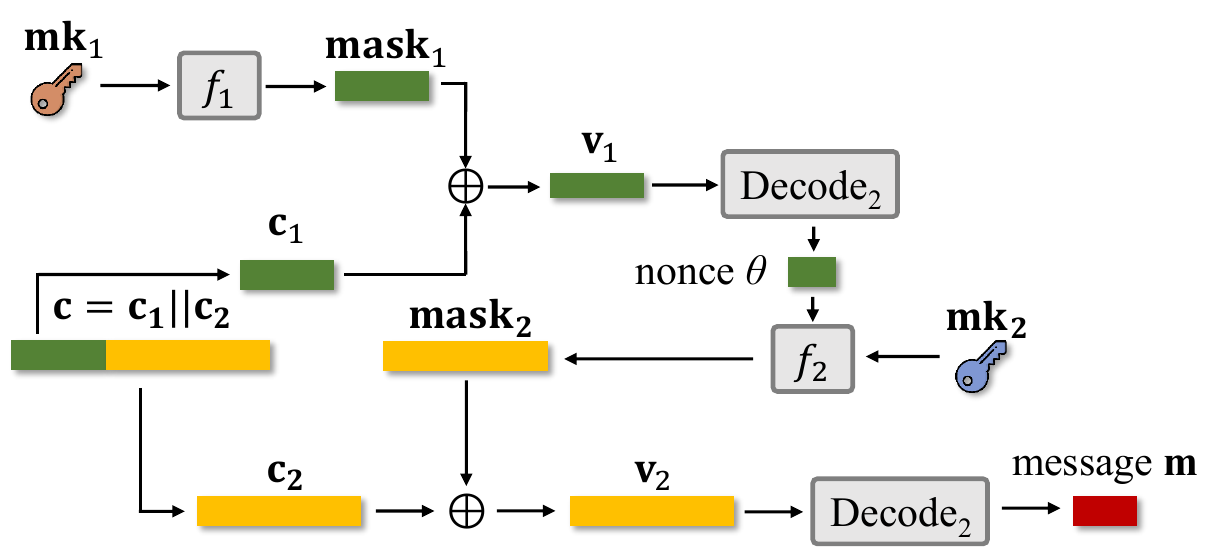}
\caption{Decryption Process.}
\end{subfigure}
\vspace{8pt}
\caption{Encryption and decryption procedures of our RSS-based CEC construction framework.}
\label{fig:cec_encrypt_decrypt}
\end{figure*}

\mypara{Detailed Construction Workflow}
Figure~\ref{fig:cec_encrypt_decrypt} details the workflow of RSS.
Given a security parameter $\lambda$, our construction is parameterized by $\lambda_1,\lambda_2,k,n_1,n_2,n$, where $\lambda_1$ denotes the key length, $\lambda_2$ denotes the nonce length, $k$ is the message length, and the output length is $n=n_1+n_2$.
We utilize two error-correcting codes $\mathcal{C}_1=(\mathsf{Encode}_1, \mathsf{Decode}_1)$ and $\mathcal{C}_2=(\mathsf{Encode}_2, \mathsf{Decode}_2)$.
We also use two functions:
$f_1 : \GF{2}^{\lambda_1} \rightarrow \GF{2}^{n_1}$ and $f_2 : \GF{2}^{\lambda_1} \times \GF{2}^{\lambda_2} \rightarrow \GF{2}^{n_2}$.
Here, $f_1$ takes the first part of the master key $\vec{mk}_1$ as input and expands it into a mask, while $f_2$ derives a mask from the second part $\vec{mk}_2$ together with a nonce.
The formal construction pipeline is described as follows.
\begin{itemize}
    \item $\KeyGen (1^{\lambda})$: 
    Sample $\vec{mk}_1,\vec{mk}_2 \xleftarrow{\$} \GF{2}^{\lambda_1} \times \GF{2}^{\lambda_1}$ and keep the master key $\vec{mk} \leftarrow (\vec{mk}_1,\vec{mk}_2)$  private. 

    \item $\Encrypt (1^{\lambda},\vec{mk},\vec{m})$:
    To encrypt a message $\vec{m}\in \GF{2}^{k}$, 
    first parse $(\vec{mk}_1, \vec{mk}_2) \leftarrow \vec{mk}$.
    Sample a nonce $\vec{\theta} \xleftarrow{\$} \GF{2}^{\lambda_2}$, 
    calculate $\vec{mask}_1 \leftarrow f_1(\vec{mk}_1)$ 
    and 
    $\vec{v}_1 \leftarrow \Encode_1(\vec{\theta})$. 
    Let $\vec{c}_1 \leftarrow \vec{v}_1 \oplus \vec{mask}_1$.
    Calculate $\vec{mask}_2 \leftarrow f_2(\vec{mk}_2,\theta)$.
    Encode the message 
    $\vec{v}_2 \leftarrow \Encode_2(\vec{m})$ and calculate 
    $\vec{c}_2 \leftarrow \vec{v}_2 \oplus \vec{mask}_2$.
    Output $(\vec{c}_1,\vec{c}_2)$.

    \item $\Decrypt(1^\lambda, \vec{mk}, \vec{c})$:
    Let $(\vec{mk}_1,\vec{mk}_2) \leftarrow \vec{mk}$ and $(\vec{c}_1,\vec{c}_2) \leftarrow \vec{c}$.
    Calculate $\vec{mask}_1 \leftarrow f_1(\vec{mk}_1)$.
    Decode the nonce 
    $\vec{\theta}' \leftarrow \txt{Decode}_1(\vec{c}_1 \oplus \vec{mask}_1)$.
    Calculate 
    $\vec{mask}'_2 \leftarrow f_2(\vec{mk}_2,\vec{\theta}')$.
    Decode the message 
    $\vec{m}' \leftarrow \txt{Decode}_2(\vec{c}_2 \oplus \vec{mask}'_2)$ 
    and output $\vec{m}'$.
\end{itemize}

In the randomness pipeline of $\Encrypt(\cdot)$, a uniform random nonce $\theta$ is first encoded into $\vec{v}_1$ and then masked into $\vec{c}_1$.
Although $\vec{c}_1$ includes a mask vector $\vec{mask}_1$, it provides static padding but does not ensure security.
To save storage space, the scheme stores only the seed $\vec{mk}_1$ and derives $\vec{mask}_1$ deterministically from it, which is equivalent to storing $\vec{mask}_1$ directly.
Due to the linearity of the error-correcting code $\mathcal{C}_1$, the codeword $\vec{c}_1$ may leak information about the encoded nonce $\theta$.
For example, let $\vec{c}_1$ and $\vec{c}_1'$ be two codewords generated with the same master key.
An adversary can compute
\[
\begin{aligned}
\Decode_1(\vec{c}_1 + \vec{c}_1')
&= \Decode_1(\vec{v}_1 + \vec{v}_1') \\
&= \Decode_1\bigl(\Encode_1(\theta) + \Encode_1(\theta')\bigr) \\
&= \Decode_1\bigl(\Encode_1(\theta + \theta')\bigr) \\
&= \theta + \theta',
\end{aligned}
\]
which reveals the relation between these two encoded nonces.
This demonstrates that directly encoding sensitive information in this pipeline would violate confidentiality.
To avoid such leakage, we only encode a uniform random nonce $\theta$ in the randomness pipeline.
In the security pipeline of $\Encrypt(\cdot)$, the message $\vec{m}$ is first encoded as $\Encode_2(\vec{m})$ and then masked by a vector
$\vec{mask}_2 \leftarrow f_2(\vec{mk}_2,\theta)$.
The confidentiality relies on the hidden master key $\vec{mk}_2$, while the nonce $\theta$ only provides randomness.
As a result, the encoded message $\vec{m}$ is protected under the mask.

The workflow of $\Decrypt(\cdot)$ follows the reverse procedure of $\Encrypt(\cdot)$: the nonce $\theta'$ is first decoded, which is then used to decrypt the message $\vec{m}'$. 
If the perturbations are within the error-correcting capability of $\mathcal{C}_1$ and $\mathcal{C}_2$, the decryption will be successful.
Notably, in addition to the decoding function, our framework inherently provides a detection function: if either $\mathsf{Decode}_1$ or $\mathsf{Decode}_2$ fails, the detection result is $\mathsf{false}$.


\subsection{\cecinst: An Instantiation of CECs}
\label{subsec:conc_cec}
Based on the proposed RSS framework, we instantiate its components with specific cryptographic and coding algorithms (e.g., concatenated codes). 
We call the resulting \underline{Conc}atenated code-based \underline{CEC} instantiation \cecinst.

Specifically, \cecinst introduces concatenated duplicated Reed-Muller and shortened Reed-Solomon (RM-RS) codes for $\mathcal{C}_1,\mathcal{C}_2$ (i.e., to instantiate $\Encode_1$, $\Decode_1$, $\Encode_2$, $\Decode_2$ in~\Cref{fig:cec_encrypt_decrypt}).
The internal code is obtained by duplicating the RM$(1,7)$ code with parameters $[n_i,k_i,d_i]=[128,8,64]$.
The external code is constructed by shortening an RS code over $\GF{2^8}$.
With RM-RS codes, we adopt a natural failure-handling strategy, where the decryption algorithm outputs a failure symbol whenever either $\mathsf{Decode}_1$ or $\mathsf{Decode}_2$ fails. 
This approach neither affects robustness nor incurs additional overhead.

The motivation for choosing concatenated codes follows the same design principle as HQC~\cite{gaborit2025hamming}, a NIST standard for key encapsulation mechanisms, where the security of the construction is independent of the specific choice of error-correcting codes. Consequently, the coding component is solely responsible for optimizing decoding performance.
In particular, the target application scenarios of CECs lie in the regime with block length $n \approx 10^4$, consistent with HQC, where RM-RS has been shown to perform well.
Therefore, under this principle, RM-RS concatenation is adopted in \cecinst due to its strong empirical performance.

Meanwhile, we instantiate the PRF ($f_1$ and $f_2$ in~\Cref{fig:cec_encrypt_decrypt}) using ChaCha20~\cite{bernstein2008chacha}, a widely deployed and secure stream cipher.
This selection is guided by both security and efficiency considerations. 

In the following three sections, we will conduct a theoretical analysis of the security properties of \cecinst.
It should be emphasized that although the choice of each component is not unique, all subsequent theoretical analyses are conducted with respect to \cecinst.

\section{Confidentiality Analysis of \cecinst}
\label{sect:confidentiality}


The confidentiality of \cecinst is defined by requiring that any adversary achieves only a negligible advantage in the IND-CPA game.
In this section, we prove that the confidentiality of \cecinst holds when secure pseudorandom functions are employed, regardless of the specific error-correcting codes used.

\begin{theorem}[Confidentiality of \cecinst]
\label{thm:ind_cpa}

Let $f_1$ and $f_2$ be independent secure pseudorandom functions.
Then \cecinst satisfies the confidentiality notion in Definition~\ref{def:cec}.
In particular, for any probabilistic polynomial-time adversary $\mathcal{A}=(\mathcal{A}_{\mathsf{CHOOSE}},\mathcal{A}_{\mathsf{GUESS}})$, where $\mathcal{A}$ makes at most $q=\mathsf{poly}(\lambda)$ queries to the oracle $\mathcal{O}$, $Adv_{\mathcal{A}}^{\mathsf{IND\text{-}CPA}}(\lambda)$ is negligible in $\lambda$.
\begin{proof}
The proof of \Cref{thm:ind_cpa} proceeds via a sequence of hybrid games.
We gradually replace the PRFs used in the encryption algorithm with truly random functions.
These replacements have a negligible effect on the advantage because of the security of PRFs.

\vspace{3pt}
\noindent \underline{\textit{Hybrid Games:}}
The proof includes the following games: 
\begin{itemize}
    \item Game~\ref{game:cec-game0} is the IND-CPA game for original \cecinst.
    \item Game~\ref{game:prf-b1} is the PRF game for $f_1$. 
    \item Game~\ref{game:prf-b2} is the PRF game for $f_2$.
    \item Game~\ref{game:cec-game1} is the IND-CPA game for modified \cecinst, where $f_1$ is replaced by a random function.
    \item Game~\ref{game:cec-game2} is the IND-CPA game for modified \cecinst, where $f_1,f_2$ are replaced by random functions.
\end{itemize}

The detailed descriptions of the modified games (Game~\ref{game:cec-game1} and Game~\ref{game:cec-game2}) are deferred to Appendix~\ref{sect:ind_cpa_games}.

\vspace{3pt}
\noindent \underline{\textit{Real Game:}}
Game~\ref{game:cec-game0} corresponds to the real IND-CPA experiment.
For simplicity, we adopt the convention of representing logical outcomes as binary values: any successful decision or match, i.e., $b = b'$ is denoted by 1, while a mismatch is denoted by 0.

\refstepcounter{game}
\begin{tcolorbox}[
breakable,
enhanced jigsaw,
colback=white,
colframe=black,
coltitle=black,
colbacktitle=white,
title={Game~\thegame~$\mathsf{Game}_{\mathcal{A}}^{\mathsf{IND\text{-}CPA}}(\lambda)$ (Original IND-CPA Game)},
fonttitle=\bfseries,
boxrule=0.5pt,
boxsep=0pt,
top=0.5pt,
bottom=0.5pt,
left skip=11pt,
right skip=11pt,
]
\label{game:cec-game0}

\begin{enumerate}[leftmargin=1.2em,label=\arabic*.,nosep]
    \item $(\vec{mk}_1,\vec{mk}_2) \xleftarrow{\$} \GF{2}^{\lambda_1} \times \GF{2}^{\lambda_1}$
    
    \item $\mathcal{O}(\vec{m})$:
    \begin{enumerate}
        \item $\vec{\theta} \xleftarrow{\$} \GF{2}^{\lambda_2}$
        \item $\vec{c}_1 \leftarrow \Encode_1(\vec{\theta}) + f_1(\vec{mk}_1)$
        \item $\vec{c}_2 \leftarrow \Encode_2(\vec{m}) + f_2(\vec{mk}_2,\vec{\theta})$
        \item return $(\vec{c}_1,\vec{c}_2)$
    \end{enumerate}
    
    \item $(\vec{m}_0,\vec{m}_1) \leftarrow 
    \mathcal{A}_{\mathsf{CHOOSE}}^{\mathcal{O}}(1^\lambda)$
    
    \item $b \xleftarrow{\$} \{0,1\}$
    
    \item $\vec{\theta}^* \xleftarrow{\$} \GF{2}^{\lambda_2}$
    
    \item $\vec{c}_1^* \leftarrow \Encode_1(\vec{\theta}^*) + f_1(\vec{mk}_1)$
    
    \item $\vec{c}_2^* \leftarrow \Encode_2(\vec{m}_b) + f_2(\vec{mk}_2,\vec{\theta}^*)$
    
    \item $b' \leftarrow 
    \mathcal{A}_{\mathsf{GUESS}}^{\mathcal{O}}(1^\lambda,(\vec{c}_1^*,\vec{c}_2^*))$
    
    \item return $(b' = b)$
\end{enumerate}

\end{tcolorbox}








\noindent\underline{\textit{Replace $f_1(\vec{mk}_1)$:}}
We next modify Game~\ref{game:cec-game0} by replacing the value $f_1(\vec{mk}_1)$ with a uniformly random vector.
Specifically, in Game~\ref{game:cec-game1}, the value $f_1(\vec{mk}_1)$ is replaced by $\vec{r}_1 \xleftarrow{\$} \GF{2}^{n_1}$.
Equivalently, we replace $f_1(\vec{mk}_1)$ with the output of a truly random function $R_1: \GF{2}^0 \to \GF{2}^{n_1}$.

\vspace{3pt}
\noindent\underline{\textit{PRF Reduction for $f_1$:}}
To justify this replacement between Game~\ref{game:cec-game0} and Game~\ref{game:cec-game1}, we construct a distinguisher $\mathcal{B}_1$ against the pseudorandomness of $f_1$.
The distinguisher $\mathcal{B}_1$ is given oracle access to $\mathcal{O}_1(\cdot)$,
which returns either $f_1(\vec{mk}_1)$ or the output of a truly random function $R_1$.
Instead of directly analyzing the oracle $\mathcal{O}_1$ itself, $\mathcal{B}_1$ simulates the IND-CPA challenger for $\mathcal{A}$ as in Game~\ref{game:prf-b1}.
When $\mathcal{A}$ wins its game, i.e., $\hat{\beta}=\beta$, $\mathcal{B}_1$ guesses $b'=0$, i.e., $\mathcal{O}_1=f_1(\vec{mk}_1)$.

If $\mathcal{O}_1=f_1(\vec{mk}_1)$, the simulation is same as Game~\ref{game:cec-game0}. Let
\[
p_0 := \Pr[\mathsf{Game}~\ref{game:cec-game0}=1] =\Pr[\mathcal{A} \text{ wins Game~\ref{game:cec-game0}}] .
\]

If $\mathcal{O}_1=R_1$, the simulation is same as Game~\ref{game:cec-game1}. Let
\[
p_1 := \Pr[\mathsf{Game}~\ref{game:cec-game1}=1] =\Pr[\mathcal{A} \text{ wins Game~\ref{game:cec-game1}}] .
\]
By the definition of PRF advantage, $\mathrm{Adv}^{\mathrm{PRF}}_{\mathcal{B}_1}$ can be written as
\[
\begin{aligned}
&\quad \  |\frac{1}{2} \big( \Pr[\mathcal{B}_1 \text{ wins } \mid \mathcal{O}_1=f_1] 
+ \Pr[\mathcal{B}_1 \text{ wins } \mid \mathcal{O}_1={R_1}] \big) - \frac{1}{2} | \\
&= |\frac{1}{2} \big( \Pr[\mathcal{A} \text{ wins Game~\ref{game:cec-game0} }] 
+ \Pr[\mathcal{A} \text{ loses Game~\ref{game:cec-game1} }] \big) - \frac{1}{2} | \\
&= |\frac{p_0 + 1 - p_1}{2} - \frac{1}{2} | = \frac{|p_0 - p_1|}{2}.
\end{aligned}
\]
Hence, the difference between Game~\ref{game:cec-game0} and Game~\ref{game:cec-game1} is bounded by
\begin{equation}
\label{eq:indgame_01}
\bigl|\Pr[\mathsf{Game}~\ref{game:cec-game0}=1] - \Pr[\mathsf{Game}~\ref{game:cec-game1}=1]\bigr| \le 2 \,\mathrm{Adv}^{\mathrm{PRF}}_{\mathcal{B}_1}(\lambda).
\end{equation}


\noindent\underline{\textit{Replace $f_2(\vec{mk}_2,\theta)$:}}
We next replace $f_2(\vec{mk}_2,\cdot)$ with a truly random function in Game~\ref{game:cec-game2}.
That is, $f_2(\vec{mk}_2,\cdot)$ is replaced by $R_2 : \GF{2}^{\lambda_2} \to \GF{2}^{n_2}$.
To ensure consistency, we maintain a table $\mathcal{D}$.
For each query on input $\theta$, if $\theta \notin \mathcal{D}$, we sample 
$\mathcal{D}[\theta] \xleftarrow{\$} \GF{2}^{n_2}$.
The output corresponding to $\theta$ is then defined as $\mathcal{D}[\theta]$.

\vspace{3pt}
\noindent\underline{\textit{PRF Reduction for $f_2$:}}
To justify this replacement between Game~\ref{game:cec-game1} and Game~\ref{game:cec-game2}, we construct a distinguisher $\mathcal{B}_2$ against the pseudorandomness of $f_2$.
The distinguisher $\mathcal{B}_2$ is given oracle access to $\mathcal{O}_2(\cdot)$,
which is either the pseudorandom function $f_2(\vec{mk}_2,\cdot)$ with a secret $\vec{mk}_2$ 
or a truly random function $R_2(\cdot)$.
$\mathcal{B}_2$ simulates the IND-CPA challenger for $\mathcal{A}$ as in Game~\ref{game:prf-b2}.
If $\mathcal{O}_2 = f_2(\vec{mk}_2,\cdot)$, then the simulation is identical to Game~\ref{game:cec-game1}.
If $\mathcal{O}_2 = R_2(\cdot)$, then the simulation is identical to Game~\ref{game:cec-game2}.

Let $p_1 := \Pr[\mathsf{Game}~\ref{game:cec-game1}=1]$ and $p_2 := \Pr[\mathsf{Game}~\ref{game:cec-game2}=1]$.
Using an analysis similar to that in the PRF reduction for $f_1$, we have
\[
\mathrm{Adv}^{\mathrm{PRF}}_{\mathcal{B}_2}(\lambda)
= \frac{|p_1 - p_2|}{2}.
\]
Hence,
\begin{equation}
    \label{eq:indcpa_12}
\bigl|\Pr[\mathsf{Game}~\ref{game:cec-game1}=1]
- \Pr[\mathsf{Game}~\ref{game:cec-game2}=1]\bigr|
\le 2\mathrm{Adv}^{\mathrm{PRF}}_{\mathcal{B}_2}(\lambda).
\end{equation}

\noindent\underline{\textit{Analysis of Game~\ref{game:cec-game2}:}}
In Game~\ref{game:cec-game2}, $\vec{c}_1$ is masked by a uniform random string, 
and $\vec{c}_2$ is masked by the output of a random function $R_2$.
\[
\vec{c}_1 = \txt{Encode}_1(\theta) + r_1, \qquad
\vec{c}_2 = \txt{Encode}_2(\vec{m}_b) + R_2(\theta).
\]

Let $\mathsf{Bad}$ denote the event that the nonce $\theta^*$ is used in any of the $q$ queries to the encryption oracle.
We have $\Pr[\mathsf{Bad}] \le \frac{q}{2^{\lambda_2}}$.
Conditioned on $\neg \mathsf{Bad}$, the value $R_2(\theta^*)$ has never been queried before,
and hence is a fresh uniform random string independent of the adversary's view.
Moreover, $r_1$ is also a uniform random string independent of everything else.
Therefore, the challenge ciphertext $(\vec{c}_1^*, \vec{c}_2^*)$ is distributed independently
of the challenge bit $b$.
It follows that $\Pr[b' = b \mid \neg \mathsf{Bad}] = \tfrac{1}{2}.$
Consequently, $\Pr[b' = b]$ can be bounded as
\begin{equation}
    \label{eq:nonce_bad}
    \begin{aligned}
&\Pr[b' = b \mid \neg \mathsf{Bad}]\Pr[\neg \mathsf{Bad}]
 + \Pr[b' = b \mid \mathsf{Bad}]\Pr[\mathsf{Bad}] \\
&\le \frac{1}{2}\bigl(1-\Pr[\mathsf{Bad}]\bigr) + \Pr[\mathsf{Bad}] \le \frac{1}{2} + \frac{q}{2^{\lambda_2+1}} .
\end{aligned}
\end{equation}
Combining the results in Eq.~\eqref{eq:indgame_01}, Eq.~\eqref{eq:indcpa_12}, and Eq.~\eqref{eq:nonce_bad}, we have
\begin{equation}
    \label{eq:adv_indcpa}
    \mathrm{Adv}_{\mathcal{A}}^{\mathrm{IND\text{-}CPA}}(\lambda)
\le 2\,\mathrm{Adv}_{\mathcal{B}_1}^{\mathrm{PRF}}(\lambda)
+ 2\,\mathrm{Adv}_{\mathcal{B}_2}^{\mathrm{PRF}}(\lambda)
+ \frac{q}{2^{\lambda_2+1}}.
\end{equation}

Since $q=\mathsf{poly}(\lambda)$, $\lambda_2=\mathsf{poly}(\lambda)$, and both $f_1,f_2$ are secure PRFs,
the advantage $\mathrm{Adv}_{\mathcal{A}}^{\mathrm{IND\text{-}CPA}}(\lambda)$ is negligible in $\lambda$, which completes the proof.
\end{proof}
\end{theorem}

\section{Robustness Analysis of \cecinst}
\label{sect:robustness}
The robustness of \cecinst can be characterized through the Decoding Failure Rate (DFR) of the underlying concatenated codes $\mathcal{C}_1$ and
$\mathcal{C}_2$.
The DFR of RM-RS codes is given in Theorem~\ref{thm:rs_code}, based on the analysis of~\cite{gaborit2025hamming}.

\begin{theorem}[DFR of the concatenated code~\cite{gaborit2025hamming}] 
\label{thm:rs_code}
The DFR of the concatenated code using an RS code $[n_e,k_e,d_e]_{\GF{256}}$ as the external code and a duplicated RM code as the internal code can be upper bounded as follows:
\begin{equation}
\label{eq:rs_dfr}
\mathrm{DFR}=\sum_{l=\delta_e+1}^{n_e} \binom{n_e}{l}p_i^l(1-p_i)^{n_e-l},
\end{equation}
where $d_e=2\delta_e+1$ and $p_i$ is the upper bound of the DFR of RM codes, as defined in Theorem~\ref{thm:rm_code}.
    
\end{theorem}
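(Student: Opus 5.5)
The approach is to treat the concatenated decoder as a two-stage procedure and reduce the outer failure event to a binomial tail. First I fix the decoding model. The received binary word of length $n_e n_i$ is split into $n_e$ blocks of length $n_i$, one block per external RS symbol. Each block is decoded independently by the (duplicated) RM decoder, which returns an element of $\GF{2^{k_i}}=\GF{256}$. The resulting word in $\GF{256}^{n_e}$ is then passed to a bounded-distance RS decoder, for instance Berlekamp--Welch or Berlekamp--Massey, with decoding radius $\delta_e=\lfloor (d_e-1)/2\rfloor$. Since $d_e=2\delta_e+1$, the standard unique-decoding guarantee says this decoder returns the transmitted RS codeword whenever at most $\delta_e$ symbols are wrong.

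Second, I model the inner-decoding outcomes as Bernoulli trials. Under the channel $\echannel$ of Definition~\ref{def:cec}, the noise is i.i.d.\ Bernoulli per bit, so the noise patterns on the $n_e$ disjoint blocks are mutually independent. The event ``block $j$ is mis-decoded'' depends only on the noise in block $j$. The argument uses the codeword only through the noise, because the RM code is linear and the maximum-likelihood decoder is translation-invariant: its failure probability is the same for every transmitted inner codeword. By Theorem~\ref{thm:rm_code}, each block fails with probability at most $p_i$. So the symbol-error indicators $X_1,\dots,X_{n_e}$ are independent with $\Pr[X_j=1]=p'_j\le p_i$.

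Third, I bound the outer failure probability. The RS stage can fail only if $\sum_j X_j\ge\delta_e+1$. I would then use a stochastic-dominance (coupling) step: a sum of independent Bernoulli variables with parameters $p'_j\le p_i$ is stochastically dominated by $\mathrm{Bin}(n_e,p_i)$. To see this, realize each $X_j$ as $\mathbf{1}[U_j<p'_j]\le \mathbf{1}[U_j<p_i]$ with $U_j$ uniform. This gives
\[
\mathrm{DFR}\le\Pr[\mathrm{Bin}(n_e,p_i)\ge\delta_e+1]=\sum_{l=\delta_e+1}^{n_e}\binom{n_e}{l}p_i^l(1-p_i)^{n_e-l},
\]
which is Eq.~\eqref{eq:rs_dfr}, read as an upper bound.

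For the shortened RS codes used in \cecinst, the same argument applies with $n_e$ equal to the shortened length, since shortening preserves $d_e$.

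The main obstacle is the independence and uniformity step: establishing that the inner failure probability does not depend on the transmitted symbol, and that block failures are independent. This needs linearity of the duplicated RM code together with symmetry of the decoder, including a fixed tie-breaking rule, plus the i.i.d.\ noise assumption. If the noise were only bounded-weight rather than i.i.d., this step would break. I would therefore state the i.i.d.\ channel hypothesis explicitly and rely on the conservative coupling above rather than claiming exact equality.
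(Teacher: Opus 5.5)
Your proposal is correct and follows essentially the same route as the HQC analysis that the paper imports by citation (the paper gives no proof of its own). Under the i.i.d.\ Bernoulli channel the $n_e$ inner blocks fail independently, each with probability at most $p_i$; bounded-distance RS decoding can fail only if at least $\delta_e+1$ symbols are wrong; and your coupling step neatly justifies replacing the unequal per-block probabilities by $p_i$, a monotonicity step HQC leaves implicit. One small correction: a deterministic tie-breaking rule (e.g., lowest index) is \emph{not} translation-invariant and in fact favours particular codewords. Your coupling never needs that uniformity, though, only that $p_i$ bounds the inner failure probability for \emph{every} transmitted symbol, which holds if ties are broken uniformly at random or counted as failures (as in the second term of the minimum in Theorem~\ref{thm:rm_code}).
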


\begin{theorem}[DFR of the internal code~\cite{gaborit2025hamming}]
\label{thm:rm_code}
Let $p$ denote the noise rate of the channel. The DFR 
of a duplicated RM code with dimension $k=8$ and minimum distance $d_i$ 
is upper-bounded by $p_i$, which is computed as
\[
\mathsf{min} \Big\{ \sum_{w=d_i/2}^n \mathcal{A}_w p^w (1-p)^{n-w}, 255\sum_{j=d_i/2}^{d_i}\binom{d_i}{j}p^j(1-p)^{d_i-j} \Big\},
\]
where
\[
\begin{aligned}
\mathcal{A}_w
= \mathsf{min}\Big\{\binom{n}{w},\;
&\frac{1}{2}255\binom{d_i}{d_i/2}\binom{d_i}{w-d_i/2} \\
& + 255 \sum_{j=d_i/2+1}^{d_i} \binom{d_j}{j}\binom{d_i}{w-j} \\
&+ \frac{1}{2}\binom{255}{2}\sum_{j=0}^{d_i/2}
\binom{d_i/2}{j}^3\binom{d_i/2}{w-d_i+j}
\Big\}.
\end{aligned}
\]
\end{theorem}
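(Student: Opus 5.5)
The bound in Theorem~\ref{thm:rm_code} is a union bound over maximum-likelihood decoding errors of the duplicated RM$(1,7)$ code, which I will present in the style of~\cite{gaborit2025hamming}. Write the internal code as $\mathcal{C}_i$ of length $n$ and dimension $8$, so it has $2^8=256$ codewords. By linearity and the symmetry of the Bernoulli channel, I will condition on the all-zero codeword being sent and let $\vec{e}$ be the error vector. ML (closest-codeword) decoding can fail only if some nonzero codeword $\vec{c}$ satisfies $d(\vec{e},\vec{c})\le d(\vec{e},\vec{0})$. Every nonzero codeword of the duplicated first-order RM code has weight exactly $d_i$, except the all-ones word, which has weight $2d_i$. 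Hence the failure condition becomes $\wt{\vec{e}\cap \mathrm{supp}(\vec{c})}\ge d_i/2$ for some weight-$d_i$ codeword $\vec{c}$, or the analogous condition for the all-ones word; ties count as failures.

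\textbf{Second argument of the outer minimum.} Apply a union bound over the $255$ nonzero codewords. For a fixed weight-$d_i$ codeword, the probability that at least $d_i/2$ of its $d_i$ support positions are flipped is $\sum_{j=d_i/2}^{d_i}\binom{d_i}{j}p^j(1-p)^{d_i-j}$. The all-ones word is dominated by the same type of term, which I will verify using the duplicated Hadamard structure. Summing over the $255$ words gives the second expression.

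\textbf{First argument of the outer minimum.} Partition by the error weight $w=\wt{\vec{e}}$ and write $\mathrm{DFR}\le\sum_{w}\mathcal{A}_w p^w(1-p)^{n-w}$, where $\mathcal{A}_w$ bounds the number of weight-$w$ error patterns that cause a decoding failure. Trivially $\mathcal{A}_w\le\binom{n}{w}$, and failure requires $w\ge d_i/2$. For the refined count of bad patterns I use three classes.
\begin{itemize}
\item Patterns with exactly $d_i/2$ errors on the support of some codeword. These are ties, and ties are broken favourably half the time, giving $\tfrac12\cdot 255\binom{d_i}{d_i/2}\binom{d_i}{w-d_i/2}$.
\item Patterns with more than $d_i/2$ errors inside some codeword's support, giving $255\sum_{j>d_i/2}\binom{d_i}{j}\binom{d_i}{w-j}$.
\item Patterns producing ties between pairs of nonzero codewords. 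Two weight-$d_i$ codewords intersect in $d_i/2$ positions, so the space splits into four blocks of size $d_i/2$. Counting patterns in which both codewords achieve the tie gives the $\tfrac12\binom{255}{2}\sum_j\binom{d_i/2}{j}^3\binom{d_i/2}{w-d_i+j}$ term.
\end{itemize}
Taking the minimum with $\binom{n}{w}$ gives $\mathcal{A}_w$.

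\textbf{Main obstacle.} The hardest step is getting the tie-handling combinatorics exactly right. This means checking which ties genuinely cause failure under a decoder with random tie-breaking, so that the factors $\tfrac12$ are justified. It also means checking that the pairwise-intersection block structure holds for the duplicated RM code. I would first verify this structure on the underlying Hadamard code, where any two distinct nonconstant affine functions agree on exactly half the points, and then lift it through duplication. Any residual discrepancy would be absorbed by citing the DFR analysis of~\cite{gaborit2025hamming} for the exact count.
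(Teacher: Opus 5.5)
The paper gives no proof of this statement; it quotes it from the HQC specification, so there is no in-paper route to compare with. Your union-bound framework is the right one: condition on $\mathbf{0}$, split by error weight, and use the per-codeword counts $\binom{d_i}{j}\binom{d_i}{w-j}$, which rely on $n=2d_i$. The gap is the step you call the main obstacle: you leave it open, and your plan for it would not work. You treat the third term of $\mathcal{A}_w$ as the count that handles pairwise ties, and you propose to justify it by the claim that any two weight-$d_i$ codewords meet in $d_i/2$ positions. That claim is false for the $127$ complementary pairs $\mathbf{c},\mathbf{c}+\mathbf{1}$, whose supports are disjoint (the affine functions $f$ and $f+1$ agree nowhere).

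The missing idea is that the third term is not needed at all. Fix $\mathbf{e}$ of weight $w$. Let $T_1(\mathbf{e})$ be the nonzero codewords strictly closer to $\mathbf{e}$ than $\mathbf{0}$, and $T_0(\mathbf{e})$ those at distance exactly $w$. Assume ML decoding with uniform tie-breaking; state this model explicitly, or average over a uniformly distributed transmitted symbol, since for a fixed word and a fixed deterministic tie-break the factor $\tfrac12$ need not hold. Then the failure probability given $\mathbf{e}$ is $1$ if $T_1\neq\emptyset$, and $t/(t+1)$ with $t=|T_0|$ otherwise. Since $t/(t+1)\le t/2$ for $t\ge 1$, it is always at most $|T_1(\mathbf{e})|+\tfrac12|T_0(\mathbf{e})|$. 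Summing over weight-$w$ patterns and exchanging sums gives the first two terms, counted with multiplicity. The third term is nonnegative slack, and multiple ties need no separate treatment.

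The all-ones word is also not covered by your claim that it is ``dominated by the same type of term''. In the count $\mathcal{A}_w$ at, e.g., $w=d_i+1$, it is strictly closer than $\mathbf{0}$ for all $\binom{n}{w}$ patterns, while one weight-$d_i$ slot accounts for only about half of them. Use complements instead. Identifying vectors with supports, $|\mathbf{e}\cap\mathbf{c}|+|\mathbf{e}\cap(\mathbf{c}+\mathbf{1})|=w_H(\mathbf{e})$ for every weight-$d_i$ codeword $\mathbf{c}$. This gives three cases:
\begin{itemize}
\item If $w_H(\mathbf{e})>d_i$, some weight-$d_i$ codeword is strictly closer than $\mathbf{0}$.
\item If $w_H(\mathbf{e})=d_i$, either some weight-$d_i$ codeword is strictly closer, or all $254$ tie, making the per-pattern bound at least $127$.
\item If $w_H(\mathbf{e})<d_i$, the all-ones word is strictly farther than $\mathbf{0}$.
\end{itemize}
Hence the per-pattern bound already holds when summing only over the $254$ weight-$d_i$ codewords, and the $255$th slot is slack.

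The same containment, $\{w_H(\mathbf{e})\ge d_i\}\subseteq E_{\mathbf{c}}\cup E_{\mathbf{c}+\mathbf{1}}$, gives the second bound directly. Here $E_{\mathbf{c}}$ is the event of at least $d_i/2$ errors on $\mathrm{supp}(\mathbf{c})$. Ties are counted as failures, so no tie-breaking assumption is needed, and you avoid comparing binomial tails. With these two fixes the argument is complete, and no appeal to the cited analysis is required.
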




In~\Cref{subsec:dfr_dup_rmrs}, we will provide a concrete DFR analysis of RM-RS
codes under practical parameter settings.
We will further compare RM-RS codes with alternative constructions, including duplicated codes used in GS and T2SMark.
This analysis justifies our choice of RM-RS codes by showing, both theoretically and empirically, that they provide superior robustness through a lower DFR.

\section{Soundness Analysis of \cecinst}
\label{sect:soundness}

We define the False Positive Rate (FPR) as the probability that a random vector is successfully decrypted.
Recall that in~\Cref{subsec:conc_cec}, \cecinst outputs a failure symbol if either $\Decode_1$ or $\Decode_2$ fails.
Therefore, a false positive occurs only when both decoding procedures succeed on a random input.
For the underlying \RMRS code, we define the FPR analogously as the probability that a random vector is successfully decoded. 
We begin by analyzing the FPR of a single \RMRS code.


\begin{theorem}[FPR of the RM-RS Code]
Let $\mathcal{C}\subseteq \GF{q}^{n_e}$ be an RS code RS$[n_e,k_e]$ over $\GF{q}$,
with minimum distance $d_e = n_e-k_e+1$ and error-correcting radius
$
t_e = \left\lfloor \frac{d_e-1}{2} \right\rfloor.
$
Let $\vec{x}\in\GF{2}^{n}$ be uniformly random.
Then the probability that $\vec{x}$ is decoded to some codeword in $\mathcal{C}$
under bounded-distance decoding equals
\begin{equation}
    \label{equ:fpr}
    \mathrm{FPR}
= 
\sum_{i=0}^{t_e} \binom{n_e}{i}  (q-1)^i  q^{k_e-n_e}.
\end{equation}

\begin{proof}
Consider the first-stage decoding of RM.
The input $\vec{x}$ is divided into $n_e$ equal blocks
$\vec{x} = (\vec{x}^{(1)} \| \cdots \| \vec{x}^{(n_e)})$.
Let
\[
\vec{y} = (\vec{y}^{(1)} \| \cdots \| \vec{y}^{(n_e)}),
\quad
\vec{y}^{(i)} = \mathsf{RM.Decode}(\vec{x}^{(i)}),
\]
be the output of the RM decoder.

Since $\vec{x}$ is uniformly random over $\GF{2}^n$ and
$\mathsf{RM.Decode}$ outputs the closest codeword,
each $\vec{y}^{(i)}$ is uniformly distributed over its corresponding symbol set.
Interpreting $\vec{y}^{(i)}$ as an element of $\GF{q}$,
the resulting vector $\vec{y}\in\GF{q}^{n_e}$ is uniformly random.

Let $\mathcal{C}$ denote the RS$[n_e,k_e]$ code.
The code contains exactly $|\mathcal{C}| = q^{k_e}$ codewords.
For any codeword $\vec{c}\in\mathcal{C}$, define the Hamming ball of radius $t_e$ as
\[
B(\vec{c},t_e)
= \{\vec{v}\in\GF{q}^{n_e} : d_H(\vec{v},\vec{c}) \le t_e\}.
\]
The size of each such ball is
\[
|B(\vec{c},t_e)|
= \sum_{i=0}^{t_e} \binom{n_e}{i}(q-1)^i.
\]

Since RS codes achieve a minimum distance $d_e$,
the Hamming balls $\{B(\vec{c},t_e)\}_{\vec{c}\in\mathcal{C}}$
are pairwise disjoint.
Therefore, the total number of vectors in $\GF{q}^{n_e}$
that can be successfully decoded equals
\[
|\mathcal{C}|\cdot |B(\vec{0},t_e)|
= q^{k_e}\sum_{i=0}^{t_e} \binom{n_e}{i}(q-1)^i.
\]

As the ambient space contains $q^{n_e}$ vectors in total,
the probability that a uniformly random vector $\vec{x}$
falls into one of these decoding regions is
\[
\mathrm{FPR}
= \frac{q^{k_e}}{q^{n_e}}
\sum_{i=0}^{t_e} \binom{n_e}{i}(q-1)^i,    
\]
which completes the proof.
\end{proof}
\end{theorem}

By substituting the FPRs of $\mathcal{C}_1$ and $\mathcal{C}_2$ into Eq.~\eqref{equ:fpr}, the overall FPR of \cecinst is expressed as their product.
We will provide the specific FPR of \cecinst under concrete parameters in~\Cref{sec:sound_cecmark}, where we also provide a comparative study with other existing schemes. 
As demonstrated in the evaluation, the concatenated structure of \cecinst effectively maintains a low FPR while achieving high error-correction robustness.

\section{Application: CEC-based Watermarks}
\label{sect:experiments}
Due to the favorable properties of CECs, we present their use in watermarking for AIGCs as a representative application. 

\subsection{CECMark}
\label{subsec:cecmark}

We propose CECMark, an in-processing watermarking scheme for diffusion models. 
Specifically, CECMark embeds watermark information in the initial latents using \cecinst, where an auxiliary mask $\vec{mask}_1$ is applied to improve the quality of the generated outputs.
The formal procedures are presented in Algorithm~\ref{alg:watermark_keygen}, Algorithm~\ref{alg:watermark_embedding} and Algorithm~\ref{alg:watermark_extraction}. 
Illustrations of CECMark are provided in Appendix~\ref{sect:cecmark_embed_extract}.

\mypara{Watermark Key Generation}
Let $N$ be the length of the initial latent and $[N]=\{i \mid 1 \leq i \leq N\}$.
In Algorithm~\ref{alg:watermark_keygen}, the master key of \cecinst is generated in Line~1, while a uniformly random permutation $\mat{P}$ over $[N]$ is sampled in Line~2. 
The master key must remain secret to ensure security, whereas $\mat{P}$ is random but does not need to be kept confidential.

\mypara{Watermark Embedding}
In Algorithm~\ref{alg:watermark_embedding}, $\mathsf{CECMark}.\mathsf{Embed}$ first encrypts $\vec{m}$ to $\vec{c}$ via \cecinst (Line~1). 
Given an initial latent vector $\vec{z}$, we sort its coordinates by absolute value and apply a permutation that maps the $n$ largest-magnitude positions to the prefix (Line~2-4). 
This step follows the optimization strategy proposed in~\cite{yang2025t2smark}, which restricts embedding to coordinates that remain stable under noise perturbations. 
We then embed $\mathbf{c}$ by modulating the signs of these $n$ coordinates (Line~5).
Finally, a global permutation $\mat{P}$ is applied to decorrelate the embedding locations (Line~6). 
Without this step, localized perturbations in the image domain may induce burst errors in the latent space, violating the independence assumption commonly used in DFR analysis. 
The global permutation mitigates such structured errors and improves robustness.

\mypara{Watermark Extraction}
As shown in Algorithm~\ref{alg:watermark_extraction}, given a recovered latent vector, $\mathsf{CECMark}.\mathsf{Extract}(\cdot)$ first applies the inverse permutation $\mat{P}^{-1}$ to restore the canonical ordering (Line~1). 
The watermark bits are then directly recovered from the signs of the first $n$ coordinates in Line~2. 
In Line~3, the resulting vector is decrypted by \cecinst, yielding either the embedded message or the failure symbol $\perp$. 

\begin{algorithm}[t]
\caption{$\mathsf{CECMark}.\mathsf{KeyGen}$}
\label{alg:watermark_keygen}
\begin{algorithmic}[1]
\Require{Security parameter $\lambda$}
\Ensure{Master key $\vec{mk} \in \GF{2}^{2\lambda_1}$, permutation $\mat{P} : [N] \rightarrow [N]$}

\State $\vec{mk} \leftarrow \cecinst.\mathsf{KeyGen}(1^\lambda)$
\State $\mat{P} \xleftarrow{\$} \mathcal{P}_N$ \COMMENT{$\mathcal{P}_N$ is the set of all bijections over $[N]$}
\State \Return $\vec{mk}, \mat{P}$

\end{algorithmic}
\end{algorithm}

\begin{algorithm}[t]
\caption{$\mathsf{CECMark}.\mathsf{Embed}$}
\label{alg:watermark_embedding}
\begin{algorithmic}[1]

\Require{Security parameter $\lambda$, initial latent $\mathbf{z} \in \mathbb{R}^N$, message $\vec{m} \in \GF{2}^k$, master key $\vec{mk} \in \GF{2}^{2\lambda_1}$, system permutation $\mathbf{P} \in \mathcal{P}_N$}
\Ensure{Watermarked latent $\mathbf{z}' \in \mathbb{R}^N$}

\State $\vec{c} \leftarrow \cecinst.\mathsf{Encrypt}(1^\lambda,\vec{mk},\vec{m})$ \COMMENT{Ciphertext of length $n$}

\State $\vec{s} \xleftarrow{\$} c\parallel \mathsf{RandBits}(N-n)$ \COMMENT{Pad to length $N$}

\State $\mathbf{v} \gets \mathsf{Sort}(|\mathbf{z}|, \mathsf{descending})$ \COMMENT{Sort magnitude}
\State $\mathbf{v} \xleftarrow{\$} \mathsf{Shuffle}(\mathbf{v}_{1:n})\parallel \mathsf{Shuffle}(\mathbf{v}_{n+1:N})$

\COMMENT{Shuffle significant and insignificant parts}

\State $\mathbf{z}' \gets  (2\mathbf{s} - 1)\cdot \mathbf{v}$ \COMMENT{Embed into signs}

\State $\mathbf{z}' \gets \mathbf{P}(\mathbf{z}')$ \COMMENT{Apply the global permutation}

\State \Return $\mathbf{z}'$

\end{algorithmic}
\end{algorithm}

\begin{algorithm}[t]
\caption{$\mathsf{CECMark}.\mathsf{Extract}$}
\label{alg:watermark_extraction}
\begin{algorithmic}[1]

\Require{Security parameter $\lambda$, recovered latent $\hat{\mathbf{z}} \in \mathbb{R}^N$, master key $\vec{mk} \in \GF{2}^{2\lambda_1}$, system permutation $\mathbf{P} \in \mathcal{P}_N$}
\Ensure{Extracted message vector $\hat{\vec{m}} \in \GF{2}^k$ \textbf{or} $\mathsf{\perp}$}

\State $\hat{\mathbf{z}} \gets \mathbf{P}^{-1} (\hat{\mathbf{z}})$ \COMMENT{Reverse the global permutation}
\State $\hat{\mathbf{c}} \gets \mathsf{Sign}(\hat{\mathbf{z}}_{1:n})/2+0.5$ \COMMENT{Extract ciphertext}

\State $\hat{\vec{m}} \leftarrow \cecinst.\mathsf{Decrypt}(1^\lambda,\vec{mk},\hat{\mathbf{c}})$ \COMMENT{Message or $\perp$}
\State \Return $\hat{\vec{m}}$

\end{algorithmic}
\end{algorithm}

\subsection{Experimental Setups}
\label{sect:experiment_setups}
\mypara{Text-to-image Model}
We use Stable-Diffusion-2.1-Base~\cite{Rombach_2022_CVPR} to implement watermarking schemes. The size of the initial latent is $n = 4 \times 64 \times 64$ and the output image size is $512 \times 512$.
For image generation, we set generation steps to $T = 50$ with DPMSolver~\cite{Lu2025DPMSolver} under a guidance scale of $\gamma = 3.0$.
For latent inversion, we adopt the DPMInversionSolver~\cite{DPMInv} for $T' = 50$ steps under a null prompt condition.
We sample $1,000$ prompts from the test split of the Stable Diffusion Prompts Dataset~\cite{StableDiffusionPromptsDatasets} to generate images. 

\mypara{Parameters of CECMark}
We set the security parameters as $k = \lambda_1 = 256$ and $\lambda_2 = 64$. This configuration aligns with NIST Category 5 to ensure the highest post-quantum security level while providing sufficient capacity for practical payloads, with the $2^{64}$ randomness space being more than sufficient throughout the watermark's lifecycle.
This is consistent with standard practice in symmetric encryption constructions built from block ciphers such as AES~\cite{daemen2002design}, where initialization vectors or nonces are typically much shorter than the underlying security parameter.
We instantiate the pseudorandom functions $f_1$ and $f_2$ using \textsf{ChaCha20}'s keystream generator~\cite{bernstein2008chacha}, which is parameterized by a 256-bit key and a 96-bit nonce, i.e., we set 
\begin{align*}
    f_1&=\mathsf{ChaCha20.KeyStreamGenerator}(\vec{mk}_1,1^{96}),\\
    f_2&=\mathsf{ChaCha20.KeyStreamGenerator}(\vec{mk}_2,0^{32}\parallel \theta).
\end{align*}
To evaluate the impact of embedding capacity and robustness, we further instantiate three concrete watermark configurations with different channel lengths, as summarized in~\Cref{tab:rssmark-params}.

\begin{table}[t]
\centering
\caption{Parameters of CECMark Variants.}
\label{tab:rssmark-params}
\small
\begin{tabular}{cccc}
\toprule
 & CECMark-0 & CECMark-1 & CECMark-2 \\
\midrule
$n_1$      & $4,096$ & $2,560$  & $2,048$ \\
$n_2$      & $12,288$ & $9,728$  & $6,144$  \\
RS$_1$     & $[16,8,9]$ & $[10,8,3]$   & $[16,8,9]$ \\
RM$_1$     & $[256,8,128]$ & $[256,8,128]$ & $[128,8,64]$ \\
RS$_2$     & $[48,32,17]$ & $[38,32,7]$  & $[48,32,17]$ \\
RM$_2$     & $[256,8,128]$ & $[256,8,128]$ & $[128,8,64]$ \\
Occupancy  & $100\%$ & $75\%$ & $50\%$ \\
\bottomrule
\end{tabular}
\end{table}

\mypara{Efficiency of CECMark}
CECMark achieves 243--250 it/s (iterations per second) throughput for encoding and 7.79--10.46 it/s for decoding, as RM-RS decoding is more complex than encoding.
This asymmetry is acceptable since encoding is much more frequent in practice.
Moreover, compared to image generation (0.56 it/s) and latent inversion (0.85 it/s), the overhead of CECMark is negligible.
The detailed analysis is in Appendix~\ref{sec:throughput}.

\mypara{Baseline Watermarking Schemes}
We adopt three high-capacity in-processing watermarking schemes as baselines: GS~\cite{yang2024gaussian}, PRCMark~\cite{DBLP:conf/iclr/GunnZS25}, and T2SMark~\cite{yang2025t2smark}. 
Since the performance of watermark schemes is closely tied to their encoding capacity, we fix the payload size to 256 bits across all schemes to ensure a fair comparison.
All payloads are mutually distinct across generations.
Detailed settings of each scheme are in Appendix~\ref{sect:baseline-details}.

\subsection{Fidelity of CECMark}
\label{sec:filde_cecmark}

\mypara{Image Quality Metrics}
We quantitatively evaluate image quality using Fréchet Inception Distance (FID)~\cite{NIPS2017_8a1d6947FID} and CLIP score~\cite{hessel-etal-2021-clipscore}.
FID measures the distributional discrepancy between reference and evaluated images, with lower values indicating closer similarity.
We fix the reference set to 1,000 images generated by the original pipeline.
CLIP score measures the semantic alignment between the generated images and their prompts, where higher values reflect stronger fidelity. 
We employ a 1.3B CLIP model, CLIP-ViT-g-14-laion2B-s12B-b42K~\cite{Schuhmann2022CLIP}, to evaluate similarity for each image-prompt pair and compute the average across all pairs.

\mypara{Consistency Metrics}
To assess the consistency between unwatermarked and watermarked images, we compute FID and CLIP scores for the original pipeline and all watermarking schemes over 10 trials, each with a fresh seed or keypair. 
We then perform two-sample t-tests with hypotheses $H_0: \mu_1 = \mu_0$ and $H_1: \mu_1 \ne \mu_0$, where $\mu_0$ and $\mu_1$ denote the mean metric values for unwatermarked and watermarked images. 
A higher $p$-value provides stronger support for $H_0$.

\mypara{Evaluation Results}
The evaluation results are shown in~\Cref{tab:quality}, and the images generated by different schemes are shown in~\Cref{fig:image-examples}.
For FID, all watermark schemes yield $p$-values below 0.05, indicating that watermarking alters the distributional similarity compared to the reference watermarked images.
However, CECMark achieves the highest $p$-value among all schemes, suggesting greater consistency with unwatermarked images.
For the CLIP score, all schemes show no statistically significant difference from the baseline except GS.
Notably, CECMark-0 demonstrates both the highest semantic fidelity and the strongest consistency with unwatermarked images. 
Overall, CECMark preserves semantic alignment while exerting a smaller impact on distributional similarity compared to other watermarking schemes.

\begin{table}[t]
\caption{Quality of watermarked images. We report the mean and standard error (SE) of each metric, together with the $p$-value between watermarked and non-watermarked images.}
\label{tab:quality}
\setlength{\tabcolsep}{2pt}
\small
\begin{tabular}{cccc}
\toprule
Schemes & FID $\downarrow$ ($p$-value $\uparrow$)  & CLIP Score $\uparrow$ ($p$-value $\uparrow$) \\ \midrule
w./o. WM & $50.413\pm0.088 (\phantom{1}\text{ --- }\phantom{1})$ & $34.637\pm0.026 (\phantom{1}\text{ --- }\phantom{1})$ \\ \midrule
GS & $53.471\pm0.244 (0.000)$ & \textbf{$34.521\pm0.048 (0.047)$} \\
PRCMark & $\textbf{50.853}\pm0.093 (0.003)$ & $\underline{34.593}\pm0.038 (\underline{0.359})$ \\
T2SMark & $51.269\pm0.096 (0.000)$ & $34.586\pm0.042 (0.321)$ \\ \midrule
CECMark-0 & $50.965\pm0.107 (0.001)$ & $\textbf{34.617}\pm0.023 (\textbf{0.576})$ \\
CECMark-1 & $\underline{50.866}\pm0.145 (\textbf{0.015})$ & $34.541\pm0.045 (0.085)$ \\
CECMark-2 & \textbf{$50.986\pm0.146 (\underline{0.003})$} & $34.578\pm0.020 (0.091)$ \\  \bottomrule
\end{tabular}
\end{table}

\subsection{Confidentiality of CECMark}
\label{sec:conf_cecmark}

\mypara{Theoretical Analysis}
The confidentiality of different watermarking schemes is summarized in Table~\ref{tab:complexity-comparison}.
(1) In GS, there exists no randomness\footnote{
Although~\cite{thietke2025towards} discusses some GS variants that could achieve IND-CPA security, they require recording all previously used nonces and performing decryption by enumerating them, which is highly impractical in realistic settings.},
i.e., when the message is fixed, the signs of the initial latent are fixed. 
In the IND-CPA game, we can simply query the oracle to encrypt $\vec{m}_0$ again and obtain the ciphertext $\vec{c}_0$.
If $\vec{c}^{*}=\vec{c}_0$, output $b=0$.
Otherwise, output $b=1$.
The success probability is $100\%$ with $O(1)$ queries, i.e., $O(n)$ complexity.
(2) As discussed in~\cite{wang2025cryptanalysis}, PRCMark suffers from a $\binom{n}{t/2}$ key recovery attack.
By modifying the key recovery attack, the adversary can obtain a partial secret key and further decode the message with accuracy comparable to that of the secret-key holder. 
The details of the attack are provided in Appendix~\ref{sect:prc_attack}.
(3) Since the length of the master key $\vec{mk}_{\rm T2S}$ of T2SMark is only $16$ bits, an adversary can directly enumerate the master key and recover the session key to break the scheme.
The adversary could also directly enumerate the $16$-bit session key $\vec{sk}_{\rm T2S}$ to recover the message.
Therefore, the security of T2SMark is determined by the minimum length of the master key and the session key. 
(4) The security of CECMark depends only on the length of the master key, while the nonce segment of the latent representation mainly serves to provide diversity and prevent reuse.

\mypara{Empirical Distinguishability}
To evaluate real-world confidentiality, following~\cite{DBLP:conf/iclr/GunnZS25,yang2025t2smark}, we train a ResNet-18 binary classifier for each scheme to distinguish watermarked images from unwatermarked images.
The results show that CECMark-0 achieves the lowest distinguishability among all watermark schemes.
The detailed settings and results are provided in Appendix~\ref{sect:hyper}.

\begin{table}[t]
\centering
\caption{Comparison of theoretical and concrete time complexities against confidentiality across different schemes.}
\label{tab:complexity-comparison}
\setlength{\tabcolsep}{10pt}
\small
\begin{tabular}{lccc}
\toprule
\textbf{Scheme} & \textbf{Theoretical Time} & \textbf{Concrete Time} \\
\midrule
GS        & $O(n)$ & $2^{14}$  \\
PRCMark       & $O(\binom{n}{t/2})$ & $2^{28}$ \\
T2SMark   & $O(2^{\mathsf{min}(|\vec{mk}_{\rm T2S}|,|\vec{sk}_{\rm T2S}|)})$ & $2^{16}$ \\
\midrule
CECMark   & $O(2^{\lambda_1})$  & $2^{256}$ \\
\bottomrule
\end{tabular}
\end{table}

\begin{table}[t]
\caption{Robustness of watermarking schemes measured by True Positive Rate (TPR). 
}
\label{tab:tpr}
\setlength{\tabcolsep}{10pt}
\small
\begin{tabular}{ccc}
\toprule
Schemes & TPR@Clean $\uparrow$ & TPR@Perturbed $\uparrow$\\ \midrule

GS & $0.9689\pm0.0431$ & 0.4959 \\
PRCMark & $0.9994\pm0.0007$ & 0.7277 \\
T2SMark & $0.9863\pm0.0047$ & 0.5197 \\ \midrule
CECMark-0 & $\underline{0.9998}\pm0.0004$ & 0.8027 \\
CECMark-1 & $\textbf{1.0000}\pm0.0000$ & \textbf{0.8339} \\
CECMark-2 & $\textbf{1.0000}\pm0.0000$ & \underline{0.8148} \\ \bottomrule
\end{tabular}
\end{table}

\subsection{Robustness of CECMark}
\label{sec:robust_cecmark}

\mypara{Evaluation Metric}
We evaluate the robustness of watermarking schemes using the True Positive Rate (TPR), 
defined as the proportion of watermarked images that are still correctly identified 
as watermarked after undergoing various perturbations.
Note that the verification is evaluated in the multi-bit setting, \textit{where a true positive is counted only if the decoded message perfectly equals the original embedded message}.

\mypara{Evaluation Results}  
As a baseline, we first evaluate the clean TPR on 1,000 unperturbed watermarked images.
This evaluation is repeated across 10 trials, each with a fresh keypair.
We then evaluate robustness under 56 perturbations spanning diverse transformations and strength levels.
Detailed settings are explained in Appendix~\ref{sect:pert}.
The overall clean and perturbed TPRs are reported in~\Cref{tab:tpr} and the per-perturbation results are provided in~\Cref{fig:pert-image-tpr}.
Evaluation results demonstrate that CECMark consistently achieves higher robustness than existing watermark schemes.
Among the three parameter sets, CECMark-1 achieves the highest robustness, while CECMark-0 performs the lowest, highlighting the effectiveness of the selective embedding strategy.

\mypara{Decoding Boundary Analysis}
To characterize the decoding boundaries of different schemes, we employ the ratio of flipped signs between original and inverted latents as a unified measure of perturbation strength.
We report the corresponding TPRs under varying sign-flip ratios in~\Cref{fig:overall-sign-flip}.
The results show that CECMark maintains a stable TPR until the sign-flip ratio reaches 37\%, and its TPR drops to zero around 40\%.
In contrast, the strongest baseline PRCMark begins to degrade at 33\% and drops to zero near 39\%.
Consequently, CECMark achieves a decoding boundary approximately 10\% larger than the existing schemes, thereby achieving better watermark robustness.

\begin{figure}[t]
\centering
\includegraphics[width=0.7\linewidth]{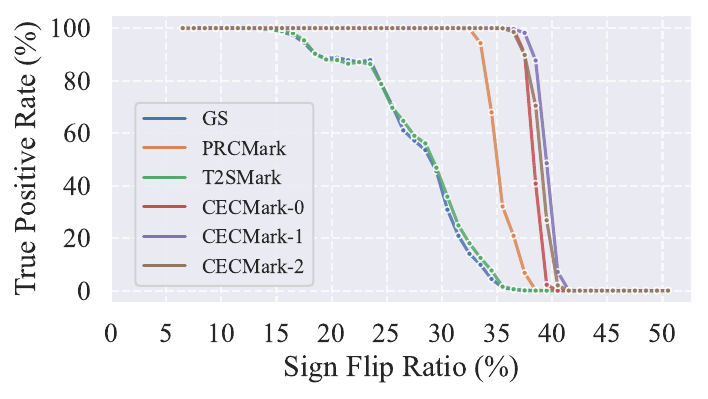}
\caption{True Positive Rate (TPR) of watermarks under different latent sign flip ratios.}
\label{fig:overall-sign-flip}
\end{figure}

\subsection{Soundness of CECMark}
\label{sec:sound_cecmark}

For the baseline watermarking schemes, the soundness is either not achieved or costly.
We summarize the soundness of different schemes in~\Cref{tab:soundness-comparison}.
Empirically, we observe zero false positives across 10,000 unwatermarked images for all watermarking schemes.
However, CECMark achieves a low theoretical FPR without introducing any other side effects.

\mypara{Soundness of GS}
GS works in either detection mode or decoding mode.
In detection mode, GS encodes a fixed message and detects whether the watermark exists with a threshold.
In decoding mode, even if the input image were not watermarked, the decoding result would be a valid message.
Therefore, soundness is achieved only in the detection version of GS.

\mypara{Soundness of PRCMark}
PRCMark chooses extra parity bits to bound the FPR, i.e., after decoding, the result is accepted if and only if the parity check bits are identical to the preset value. 
While this approach guarantees a fixed FPR, it inherently increases the code rate. 
In Belief Propagation (BP)-based decoders, a higher code rate typically reduces the error-correction capability, thereby undermining the overall robustness of the scheme.


\mypara{Soundness of T2SMark}
Because the decoding method for duplicated codes could be applied to any codeword, T2SMark has to calculate a confidence level to detect the watermark.
A high threshold weakens robustness by making the watermark easier to remove, whereas a low threshold increases false positives on unwatermarked images.
T2SMark therefore faces an inherent trade-off between robustness and soundness, which can be regarded as a side effect.

\begin{table}[b]
\centering
\caption{Comparison of soundness, side effect, and target FPR across different watermarking schemes.}
\label{tab:soundness-comparison}
\setlength{\tabcolsep}{1pt}
\small
\begin{tabular}{lccc}
\toprule
\textbf{Scheme} & \textbf{Soundness} & 
\textbf{Side Effect} &
\textbf{Target FPR} \\
\midrule
GS        & \rc (Detection only) & - & -  \\
PRCMark   & \Cfill  & Reduced robustness & $10^{-5}$ \\
T2SMark   & \Cfill  & Reduced robustness & $10^{-6}$ \\ \midrule
CECMark-0 & \Cfill  & None & $10^{-17}$ \\
CECMark-1 & \Cfill  & None & $10^{-5}$  \\
CECMark-2 & \Cfill  & None & $10^{-17}$ \\
\bottomrule
\end{tabular}
\end{table}


\section{Discussions}
\label{sect:discussion}
\subsection{Comparison with Duplicated Codes}
\label{subsec:dfr_dup_rmrs}

As duplicated codes are applied in GS and T2SMark, we carry out a theoretical analysis of the DFR for duplicated codes.

\begin{theorem}[DFR of Duplicated Codes]
    Let $n$ denote the code length, $k$ the message length, and $d=n/k$ the duplication factor.
    Let $p$ be the noise rate of the noisy channel. 
    The DFR of the duplicated code is as follows.
    
    \noindent If $d$ is odd, the DFR is 
    \[
    1-\bigg[1-\sum_{i=\lfloor d/2 \rfloor +1}^{d}
    \binom{d}{i}p^{i}(1-p)^{d-i} \bigg]^{k}.
    \]
    If $d$ is even, the DFR is
    \[
    1-\bigg[1-\sum_{i=d/2+1}^{d}
    \binom{d}{i}p^{i}(1-p)^{d-i}
    -\frac{1}{2}\binom{d}{d/2}p^{d/2}(1-p)^{d/2}
    \bigg]^{k}.
    \]
    \begin{proof}
        Each message bit is repeated $d=n/k$ times.
        When $d$ is odd, decoding fails if and only if more than half of its copies are flipped.
    The corresponding probability is
    \[
    p_0=\sum_{i=\lfloor d/2 \rfloor +1}^{d}\binom{d}{i}p^{i}(1-p)^{d-i}.
    \]
    When $d$ is even, in addition to the above case, a decoding failure also occurs with $1/2$ probability if exactly half of the positions are flipped.
    The corresponding probability is
    \[
    p_0=\sum_{i=\lfloor d/2 \rfloor +1}^{d}\binom{d}{i}p^{i}(1-p)^{d-i}
    +\frac{1}{2}\binom{d}{d/2}p^{d/2}(1-p)^{d/2}.
    \]

    A decoding failure occurs if and only if at least one of the duplicated codes is decoded incorrectly.
    Thus,
    \[
    \mathsf{DFR}=1-(1-p_0)^k,
    \]
    which completes the proof.
    \end{proof}
\end{theorem}

\begin{figure}[t]
\centering
\includegraphics[width=\linewidth]{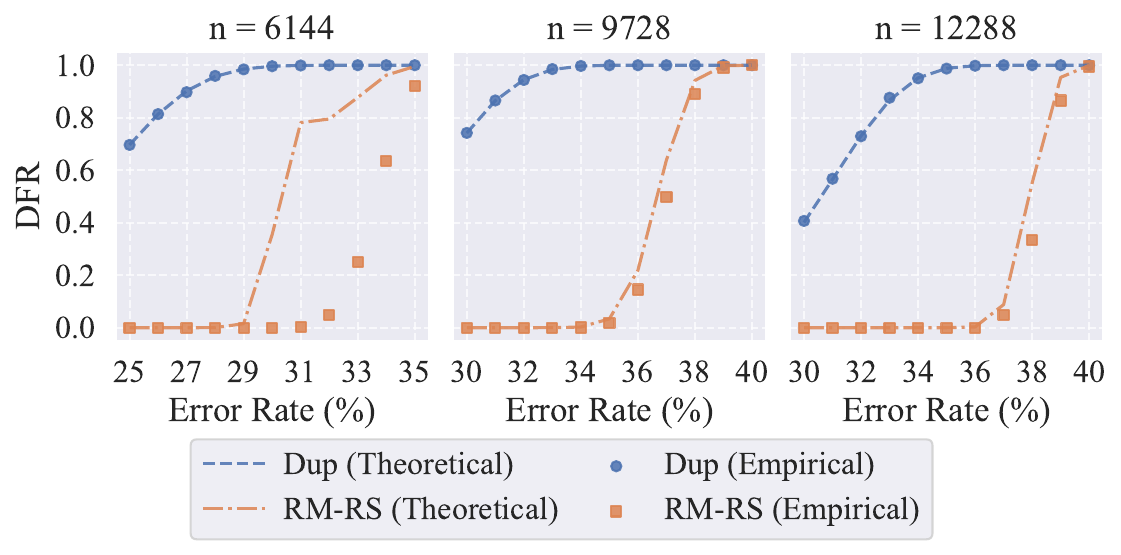}
\caption{DFR comparison between duplicated codes and RM-RS codes with different $n$.}
\label{fig:dfr_all}
\end{figure}



  

\mypara{DFR Comparison Results}
We then compare the DFR of the RM-RS code used in our CECMark scheme with that of the duplicated code adopted in GS and T2SMark. 
For a fixed message length and code length, we vary the channel error rate and evaluate the resulting DFR over $10^4$ trials for each setting. 
Specifically, we consider the message length $k=256$ and the code length $n \in \{6144, 9728, 12288\}$, corresponding to the parameters $k,n_2$ in CECMark-2, CECMark-1, and CECMark-0, respectively.
As shown in~\Cref{fig:dfr_all}, for the duplicated code, the theoretical DFR matches the empirical results exactly.
For the RM-RS code, the theoretical DFR serves as an upper bound across all parameters, indicating that the analysis is conservative and the actual performance is better than predicted.
Across all tested error rates, the RM-RS code consistently achieves a lower DFR than the duplicated code, both theoretically and empirically.

\subsection{Comparison with PRCs}
\label{sec:comp_prc}

Pseudorandom error-correcting Codes (PRCs)~\cite{christ2024pseudorandom} is the most closely related cryptographic primitive to CECs and is also a promising primitive for AIGC watermarking.
We therefore provide a dedicated comparison with PRCs to highlight the advantages of CECs and explain why CECs are needed as a distinct primitive.

\mypara{Property-Level Comparison}
PRCs~\cite{christ2024pseudorandom} are required to simultaneously satisfy three properties: undetectability, robustness, and soundness. 
These properties closely align with the requirements of CECs, except that PRCs rely on undetectability whereas CECs rely on confidentiality.
In the context of watermarking, undetectability serves two roles: it helps preserve the perceptual quality of generated content and also provides a form of information hiding.
However, undetectability is a strong requirement, and existing PRC instantiations have been shown to lose this property under practical parameters~\cite{wang2025cryptanalysis}.
Moreover, as indicated in~\Cref{sec:robust_cecmark}, enforcing undetectability may impose additional constraints on robustness.
In contrast, CECs formalize information protection directly through confidentiality.
Meanwhile, \Cref{sec:filde_cecmark} shows that CECMark can also preserve high perceptual quality in practice.

\mypara{Instantiability-Level Comparison}
As far as we know, although the works in~\cite{christ2024pseudorandom,DBLP:conf/iclr/GunnZS25,ghentiyala2024new,christ2025improved,moitra2024edit} claim that they can successfully construct PRCs, all of these constructions have flaws in choosing concrete parameters or fail to balance robustness and undetectability. 
In Appendix~\ref{sect:inst_prc}, we will conduct a detailed analysis of their limitations.
Overall, while PRCs are theoretically attractive, their practical performance is still limited.

\mypara{Building PRCs based on CECs}
In \cecinst, the RM-RS code is inherently distinguishable from random due to its strong algebraic structure, and thus does not provide undetectability. 
In our framework, undetectability depends on whether the underlying code is computationally indistinguishable from random. 
While one may replace RM-RS with pseudorandom codes such as LDPC or MDPC combined with noise, prior attacks~\cite{wang2025cryptanalysis} indicate that achieving both strong undetectability and robustness in practice remains challenging. 
Therefore, achieving full PRC security with suitable codes remains an interesting direction for future work.

\subsection{Other Possible Applications}
Following~\cite{christ2024pseudorandom,DBLP:conf/iclr/GunnZS25}, we regard watermarking as our main application scenario.
The primary challenge in applying CECs to Large Language Model (LLM) watermarking is how to embed the \cecinst codewords into token sequences.
A common approach adopted by PRCs~\cite{christ2024pseudorandom} is to let each watermark bit influence the sampling procedure of individual tokens.
However, as analyzed in~\cite{wang2025cryptanalysis}, under reasonable text quality constraints, the entropy of output tokens is too low, resulting in a noise rate that remains an unsolved challenge for existing error-correction decoders.
Moreover, constructing an edit-robust watermark for LLMs often incurs prohibitive overhead. 
Specifically, existing schemes like~\cite{christ2024pseudorandom} employ majority encoding for each watermark bit to resist the deletion channel, which substantially expands the watermark length and renders the construction impractical for deployment in real-world LLMs. 
While \cecinst could theoretically be extended to resist deletion and insertion errors via similar majority encoding techniques, we currently restrict its application to GIMs due to these practical constraints. 
We leave the concrete optimization and implementation of CECs for LLM watermarking as future work.
We believe that CECs may also be applicable to more applications such as watermarking video diffusion models~\cite{hu2025videomark}, which represents a promising direction for future research.

\section{Conclusion}
\label{sect:conclusion}
We present a new primitive, Confidential Error-Correcting Codes (CECs), a symmetric-key primitive that integrates encryption with error-correcting capability.
The confidentiality, robustness, and soundness of CECs are designed to address the key challenges in watermarking generative models.
We further develop a CEC construction framework based on a randomness-security separation structure, reducing confidentiality to pseudorandom functions, while ensuring robustness and soundness with error-correcting codes.
Instantiated with concatenated codes, we present \cecinst and analyze its security properties theoretically.
Finally, we design CECMark, a watermarking scheme for generative image models based on \cecinst.
Experiments show that CECMark maintains strong robustness and soundness and achieves consistently higher detection accuracy than existing methods under various image perturbations.
Looking forward, we believe that CECs can serve as a foundational building block for trustworthy watermarking systems, enabling the field to move from heuristic designs toward cryptographically grounded and practically reliable solutions.

\clearpage

\bibliographystyle{plainurl}
\bibliography{ref_prc}

@InProceedings{Tancik_2020_CVPR,
author = {Tancik, Matthew and Mildenhall, Ben and Ng, Ren},
title = {StegaStamp: Invisible Hyperlinks in Physical Photographs},
booktitle = {Proceedings of the IEEE/CVF Conference on Computer Vision and Pattern Recognition (CVPR)},
month = {June},
year = {2020}
}

@InProceedings{pmlr-v235-an24a,
  title = 	 {{WAVES}: Benchmarking the Robustness of Image Watermarks},
  author =       {An, Bang and Ding, Mucong and Rabbani, Tahseen and Agrawal, Aakriti and Xu, Yuancheng and Deng, Chenghao and Zhu, Sicheng and Mohamed, Abdirisak and Wen, Yuxin and Goldstein, Tom and Huang, Furong},
  booktitle = 	 {Proceedings of the 41st International Conference on Machine Learning},
  pages = 	 {1456--1492},
  year = 	 {2024},
  editor = 	 {Salakhutdinov, Ruslan and Kolter, Zico and Heller, Katherine and Weller, Adrian and Oliver, Nuria and Scarlett, Jonathan and Berkenkamp, Felix},
  volume = 	 {235},
  series = 	 {Proceedings of Machine Learning Research},
  month = 	 {21--27 Jul},
  publisher =    {PMLR},
  url = 	 {https://proceedings.mlr.press/v235/an24a.html}
}

@article{Lu2025DPMSolver,
author="Lu, Cheng
and Zhou, Yuhao
and Bao, Fan
and Chen, Jianfei
and Li, Chongxuan
and Zhu, Jun",
title="DPM-Solver++: Fast Solver for Guided Sampling of Diffusion Probabilistic Models",
journal="Machine Intelligence Research",
year="2025",
month="Aug",
day="01",
volume="22",
number="4",
pages="730--751",
issn="2731-5398",


}

@inproceedings{
yang2025t2smark,
title={T2{SM}ark: Balancing Robustness and Diversity in Noise-as-Watermark for Diffusion Models},
author={Jindong Yang and Han Fang and Weiming Zhang and Nenghai Yu and Kejiang Chen},
booktitle={The Thirty-ninth Annual Conference on Neural Information Processing Systems},
year={2025},

}

@misc{gowal2025synthidimageimagewatermarkinginternet,
      title={SynthID-Image: Image watermarking at internet scale}, 
      author={Sven Gowal and Rudy Bunel and Florian Stimberg and David Stutz and Guillermo Ortiz-Jimenez and Christina Kouridi and Mel Vecerik and Jamie Hayes and Sylvestre-Alvise Rebuffi and Paul Bernard and Chris Gamble and Miklós Z. Horváth and Fabian Kaczmarczyck and Alex Kaskasoli and Aleksandar Petrov and Ilia Shumailov and Meghana Thotakuri and Olivia Wiles and Jessica Yung and Zahra Ahmed and Victor Martin and Simon Rosen and Christopher Savčak and Armin Senoner and Nidhi Vyas and Pushmeet Kohli},
      year={2025},
      eprint={2510.09263},
      archivePrefix={arXiv},
      primaryClass={cs.CR},
 
}

@misc{cryptoeprint:2025/1620prclimits,
      author = {Danilo Francati and Yevin Nikhel Goonatilake and Shubham Pawar and Daniele Venturi and Giuseppe Ateniese},
      title = {The Coding Limits of Robust Watermarking for Generative Models},
      howpublished = {Cryptology {ePrint} Archive, Paper 2025/1620},
      year = {2025},
}

@inproceedings{Schuhmann2022CLIP,
title={{LAION}-5{B}: An open large-scale dataset for training next generation image-text models},
author={Christoph Schuhmann and Romain Beaumont and Richard Vencu and Cade W Gordon and Ross Wightman and Mehdi Cherti and Theo Coombes and Aarush Katta and Clayton Mullis and Mitchell Wortsman and Patrick Schramowski and Srivatsa R Kundurthy and Katherine Crowson and Ludwig Schmidt and Robert Kaczmarczyk and Jenia Jitsev},
booktitle={Thirty-sixth Conference on Advances in Neural Information Processing Systems (NeurIPS), Datasets and Benchmarks Track},
year={2022},
volume={35},
pages={25278--25294},

}

@misc{ShieldMntInvisibleWatermark,
  author       = {ShieldMnt},
  title        = {invisible-watermark: python library for invisible image watermark},
  year         = {2021},
  publisher    = {GitHub},
  
}

@inproceedings{hessel-etal-2021-clipscore,
    title = "{CLIPS}core: A Reference-free Evaluation Metric for Image Captioning",
    author = "Hessel, Jack  and
      Holtzman, Ari  and
      Forbes, Maxwell  and
      Le Bras, Ronan  and
      Choi, Yejin",
    editor = "Moens, Marie-Francine  and
      Huang, Xuanjing  and
      Specia, Lucia  and
      Yih, Scott Wen-tau",
    booktitle = "Proceedings of the 2021 Conference on Empirical Methods in Natural Language Processing",
    month = nov,
    year = "2021",
    address = "Online and Punta Cana, Dominican Republic",
    publisher = "Association for Computational Linguistics",

    doi = "10.18653/v1/2021.emnlp-main.595",
    pages = "7514--7528"
}

@inproceedings{NIPS2017_8a1d6947FID,
 author = {Heusel, Martin and Ramsauer, Hubert and Unterthiner, Thomas and Nessler, Bernhard and Hochreiter, Sepp},
 booktitle = {Advances in Neural Information Processing Systems},
 editor = {I. Guyon and U. Von Luxburg and S. Bengio and H. Wallach and R. Fergus and S. Vishwanathan and R. Garnett},
 pages = {},
 publisher = {Curran Associates, Inc.},
 title = {GANs Trained by a Two Time-Scale Update Rule Converge to a Local Nash Equilibrium},
 
 volume = {30},
 year = {2017}
}

@INPROCEEDINGS{DPMInv,
  author={Hong, Seongmin and Lee, Kyeonghyun and Jeon, Suh Yoon and Bae, Hyewon and Chun, Se Young},
  booktitle={2024 IEEE/CVF Conference on Computer Vision and Pattern Recognition (CVPR)}, 
  title={On Exact Inversion of DPM-Solvers}, 
  year={2024},
  volume={},
  number={},
  pages={7069-7078},
}

@misc{StableDiffusionPromptsDatasets,
	author = {{G}ustavosta},
	title = {{S}table-{D}iffusion-{P}rompts},
	year = {2022},
    url = {https://huggingface.co/datasets/Gustavosta/Stable-Diffusion-Prompts}
}

@INPROCEEDINGS {sok_zhao_sp25,
author = { Zhao, Xuandong and Gunn, Sam and Christ, Miranda and Fairoze, Jaiden and Fabrega, Andres and Carlini, Nicholas and Garg, Sanjam and Hong, Sanghyun and Nasr, Milad and Tramer, Florian and Jha, Somesh and Li, Lei and Wang, Yu-Xiang and Song, Dawn },
booktitle = { 2025 IEEE Symposium on Security and Privacy (SP) },
title = {{ SoK: Watermarking for AI-Generated Content }},
year = {2025},
volume = {},
ISSN = {2375-1207},
pages = {2621-2639}
}

@inproceedings{christ2024pseudorandom,
  author    = {Christ, Miranda and Gunn, Sam},
  title     = {Pseudorandom Error-Correcting Codes},
  booktitle = {Advances in Cryptology -- CRYPTO 2024},
  editor    = {Reyzin, Leonid and Stebila, Douglas},
  series    = {Lecture Notes in Computer Science},
  volume    = {14925},
  pages     = {325--347},
  publisher = {Springer},
  address   = {Santa Barbara, CA, USA},
  year      = {2024}
}

@inproceedings{DBLP:conf/iclr/GunnZS25,
  author       = {Sam Gunn and
                  Xuandong Zhao and
                  Dawn Song},
  title        = {An Undetectable Watermark for Generative Image Models},
  booktitle    = {The Thirteenth International Conference on Learning Representations,
                  {ICLR} 2025, Singapore, April 24-28, 2025},
  publisher    = {OpenReview.net},
  year         = {2025},
  bibsource    = {dblp computer science bibliography, https://dblp.org}
}

@inproceedings{yang2024gaussian,
  title={Gaussian shading: Provable performance-lossless image watermarking for diffusion models},
  author={Yang, Zijin and Zeng, Kai and Chen, Kejiang and Fang, Han and Zhang, Weiming and Yu, Nenghai},
  booktitle={Proceedings of the IEEE/CVF Conference on Computer Vision and Pattern Recognition},
  pages={12162--12171},
  year={2024}
}

@article{ALBUSAIDI2024100630,
title = {Redefining boundaries in innovation and knowledge domains: Investigating the impact of generative artificial intelligence on copyright and intellectual property rights},
journal = {Journal of Innovation \& Knowledge},
volume = {9},
number = {4},
pages = {100630},
year = {2024},
issn = {2444-569X},
author = {Adil S. Al-Busaidi and Raghu Raman and Laurie Hughes and Mousa Ahmed Albashrawi and Tegwen Malik and Yogesh K. Dwivedi and Thuraiya {Al- Alawi} and Mohammed AlRizeiqi and Gareth Davies and Mark Fenwick and Parul Gupta and Shashikala Gurpur and Apeksha Hooda and Paulius Jurcys and Daryl Lim and Nicola Lucchi and Tanvi Misra and Ramakrishnan Raman and Anuragini Shirish and Paul Walton}
}

@article{Lucchi_2024, title={ChatGPT: A Case Study on Copyright Challenges for Generative Artificial Intelligence Systems}, volume={15}, number={3}, journal={European Journal of Risk Regulation}, author={Lucchi, Nicola}, year={2024}, pages={602–624}}

@InProceedings{Rombach_2022_CVPR,
    author    = {Rombach, Robin and Blattmann, Andreas and Lorenz, Dominik and Esser, Patrick and Ommer, Bj\"orn},
    title     = {High-Resolution Image Synthesis With Latent Diffusion Models},
    booktitle = {Proceedings of the IEEE/CVF Conference on Computer Vision and Pattern Recognition (CVPR)},
    month     = {6},
    year      = {2022},
    pages     = {10684-10695}
}

@misc{euaiact,
title = {{Regulation (EU) 2024/1689 of the European Parliament and of the Council of 13 June 2024 laying down harmonised rules on artificial intelligence and amending Regulations (EC) No 300/2008, (EU) No 167/2013, (EU) No 168/2013, (EU) 2018/858, (EU) 2018/1139 and (EU) 2019/2144 and Directives 2014/90/EU, (EU) 2016/797 and (EU) 2020/1828 (Artificial Intelligence Act)}},
year= {2024},
author = {{European Parliament and Council of the European Union}}
}

@InProceedings{ghentiyala2024new,
  author =	{Ghentiyala, Surendra and Guruswami, Venkatesan},
  title =	{{New Constructions of Pseudorandom Codes}},
  booktitle =	{Approximation, Randomization, and Combinatorial Optimization. Algorithms and Techniques (APPROX/RANDOM 2025)},
  pages =	{54:1--54:22},
  series =	{Leibniz International Proceedings in Informatics (LIPIcs)},
  ISBN =	{978-3-95977-397-3},
  ISSN =	{1868-8969},
  year =	{2025},
  volume =	{353},
  editor =	{Ene, Alina and Chattopadhyay, Eshan},
  publisher =	{Schloss Dagstuhl -- Leibniz-Zentrum f{\"u}r Informatik},
  address =	{Dagstuhl, Germany},
  URL =		{https://drops.dagstuhl.de/entities/document/10.4230/LIPIcs.APPROX/RANDOM.2025.54},
  URN =		{urn:nbn:de:0030-drops-244202},
  doi =		{10.4230/LIPIcs.APPROX/RANDOM.2025.54}
}

@inproceedings{wang2025cryptanalysis,
  title={Cryptanalysis of LDPC-Based Pseudorandom Error-Correcting Codes},
  author={Wang, Tianrui and Wang, Anyu and Cong, Tianshuo and Ran, Delong and Liu, Jinyuan and Wang, Xiaoyun},
  booktitle={35th USENIX Security Symposium (USENIX Security 26)},
  year={2026},
publisher = {USENIX Association},
month = aug
}

@inproceedings{
thietke2025towards,
title={Towards A Correct Usage of Cryptography in Semantic Watermarks for Diffusion Models},
author={Jonas Thietke and Andreas M{\"u}ller and Denis Lukovnikov and Asja Fischer and Erwin Quiring},
booktitle={The 1st Workshop on GenAI Watermarking},
year={2025},
url={https://openreview.net/forum?id=yENK6DYuXv}
}

@article{gaborit2025hamming,
  title={Hamming quasi-cyclic (hqc)},
  author={Gaborit, Philippe and Aguilar-Melchor, Carlos and Aragon, Nicolas and Bettaieb, Slim and Bidoux, Lo{\"\i}c and Blazy, Olivier and Deneuville, Jean-Christophe and Persichetti, Edoardo and Z{\'e}mor, Gilles and Bos, Jurjen and others},
  journal={NIST Post-Quantum Standardization, 4th Round; National Institute of Standards and Technology: Gaithersburg, MD, USA},
  year={2025}
}

@book{barker2007recommendation,
  title={Recommendation for random number generation using deterministic random bit generators (revised)},
  author={Barker, Elaine B and Kelsey, John Michael and others},
  year={2007},
  publisher={US Department of Commerce, Technology Administration, National Institute of Standards and Technology}
}

@book{daemen2002design,
  title={The design of Rijndael},
  author={Daemen, Joan and Rijmen, Vincent},
  volume={2},
  year={2002},
  publisher={Springer}
}

@techreport{krawczyk1997hmac,
  title={HMAC: Keyed-hashing for message authentication},
  author={Krawczyk, Hugo and Bellare, Mihir and Canetti, Ran},
  year={1997}
}

@article{jindal2015rc4,
  title={RC4 encryption-A literature survey},
  author={Jindal, Poonam and Singh, Brahmjit},
  journal={Procedia Computer Science},
  volume={46},
  pages={697--705},
  year={2015},
  publisher={Elsevier}
}

@article{shafiee2016isaac,
  title={ISAAC: A convolutional neural network accelerator with in-situ analog arithmetic in crossbars},
  author={Shafiee, Ali and Nag, Anirban and Muralimanohar, Naveen and Balasubramonian, Rajeev and Strachan, John Paul and Hu, Miao and Williams, R Stanley and Srikumar, Vivek},
  journal={ACM SIGARCH Computer Architecture News},
  volume={44},
  number={3},
  pages={14--26},
  year={2016},
  publisher={ACM New York, NY, USA}
}

@inproceedings{bernstein2008chacha,
  title={ChaCha, a variant of Salsa20},
  author={Bernstein, Daniel J and others},
  booktitle={Workshop record of SASC},
  volume={8},
  number={1},
  pages={3--5},
  year={2008},
  organization={Lausanne, Switzerland}
}

@article{christ2025improved,
  title={Improved Pseudorandom Codes from Permuted Puzzles},
  author={Christ, Miranda and Golowich, Noah and Gunn, Sam and Moitra, Ankur and Wichs, Daniel},
  journal={arXiv preprint arXiv:2512.08918},
  year={2025}
}

@inproceedings{moitra2024edit,
  title={Edit distance robust watermarks for language models},
  author={Moitra, Ankur and Golowich, Noah},
  booktitle={Proceedings of the 38th International Conference on Neural Information Processing Systems},
  pages={20645--20693},
  year={2024}
}

@incollection{bogdanov2017pseudorandom,
  title={Pseudorandom functions: Three decades later},
  author={Bogdanov, Andrej and Rosen, Alon},
  booktitle={Tutorials on the Foundations of Cryptography: Dedicated to Oded Goldreich},
  pages={79--158},
  year={2017},
  publisher={Springer}
}

@article{mathur2007design,
  title={On the design of error-correcting ciphers},
  author={Mathur, Chetan Nanjunda and Narayan, Karthik and Subbalakshmi, KP},
  journal={EURASIP Journal on Wireless Communications and Networking},
  volume={2006},
  number={1},
  pages={042871},
  year={2007},
  publisher={Springer}
}

@article{aragon2022bike,
  title={BIKE: bit flipping key encapsulation},
  author={Aragon, Nicolas and Barreto, Paulo and Bettaieb, Slim and Bidoux, Lo{\"\i}c and Blazy, Olivier and Deneuville, Jean-Christophe and Gaborit, Philippe and Ghosh, Santosh and Gueron, Shay and G{\"u}neysu, Tim and others},
  year={2022}
}

@inproceedings{bos2018crystals,
  title={CRYSTALS-Kyber: a CCA-secure module-lattice-based KEM},
  author={Bos, Joppe and Ducas, L{\'e}o and Kiltz, Eike and Lepoint, Tancr{\`e}de and Lyubashevsky, Vadim and Schanck, John M and Schwabe, Peter and Seiler, Gregor and Stehl{\'e}, Damien},
  booktitle={2018 IEEE European symposium on security and privacy (EuroS\&P)},
  pages={353--367},
  year={2018},
  organization={IEEE}
}

@misc{c2pa,
  author = {{Coalition for Content Provenance and Authenticity}},
  title  = {C2PA},
  url    = {https://c2pa.org/}
}

@misc{openai_c2pa,
  author = {{OpenAI}},
  title  = {C2PA and SynthID in OpenAI-Generated Images},
  url    = {https://help.openai.com/en/articles/8912793-c2pa-in-images}
}

@misc{azure_content_credentials,
  author = {{Microsoft}},
  title  = {Content Credentials in Azure OpenAI},
  url    = {https://learn.microsoft.com/en-us/azure/ai-services/openai/concepts/content-credentials}
}

@misc{claude,
    author = {{Claude}},
    title = {How Claude marks AI-generated content},
    url = {https://support.claude.com/}
}

@misc{google_synthid,
  author = {{Google DeepMind}},
  title  = {SynthID: Identifying AI-Generated Content},
  url    = {https://deepmind.google/technologies/synthid/}
}

@article{hu2025videomark,
  title={Videomark: A distortion-free robust watermarking framework for video diffusion models},
  author={Hu, Xuming and Li, Hanqian and Li, Jungang and Huang, Yu and Liu, Shuliang and Zheng, Qi and Chen, Junhao and Liu, Aiwei},
  journal={arXiv preprint arXiv:2504.16359},
  year={2025}
}

\appendix


\section{Open Science}
All authors have read and consented to the open science policy. We are committed to releasing our artifacts to the community to ensure reproducibility and to benefit the broader research ecosystem. The artifacts are anonymously available at:
\url{https://anonymous.4open.science/r/USENIX-CECMark-6E87/}.

\mypara{Code Availability}
The artifact includes implementations for:
\begin{itemize}
\item CECMark and the underlying Conc-CEC scheme;
\item Baseline watermark schemes, including GS, PRCMark, and T2SMark;
\item Evaluation metrics, such as CLIP, FID, decoding correctness, latent sign flipping, and distinguishability;
\item  Statistical analysis and figure generation.
\end{itemize}
All code is clearly documented in the README file, and the project structure is carefully organized to ensure clarity, reproducibility, and extensibility for future research.

\mypara{Data Availability}
Due to the large size of the generated datasets, we do not include the full set of images and corresponding latents in the artifact.
Instead, we provide a one‑request demo dataset for each scheme.
We also provide the statistical results over the full dataset to reproduce the figures in the paper.
Public resources, such as the Stable Diffusion Prompt dataset and the Stable Diffusion 2.1 Base model, are excluded from the artifact.

\section{Ethical Considerations}
\label{sect:ethical}
This work introduces a new cryptographic primitive named CEC, and applies it to design CECMark for watermarking AI-generated images of generative image models.
While CECMark improves the confidentiality, robustness and soundness of the watermark, several ethical considerations remain to be discussed.

\mypara{Privacy Risks of Embedded Information}
The watermarking techniques might be used to embed information such as user identifiers or provenance data, which introduces potential privacy risks if used without proper authorization.
Although CECMark provides confidentiality against adversaries without access to the secret key, these risks stem from how watermarking is deployed rather than the embedding itself.
To mitigate these concerns, users should be clearly informed and give explicit consent before their information is embedded into watermarks, and such mechanisms should only be used for legitimate purposes such as provenance or attribution.

\mypara{Security of Pseudorandom Function Instantiation}
As shown in our work, the confidentiality of CECMark is reduced to the security of pseudorandom functions.
When deploying watermarking systems in practice, we suggest that cryptographically secure pseudorandom functions should be used.
If model providers replace pseudorandom functions with non-cryptographic random generators for simplicity, it may violate the underlying security assumptions and lead to potential information leakage.

\mypara{Impact Control of Theoretical Attacks}
Although we reveal the limitations of existing watermarking schemes including GS, PRCMark, and T2SMark, our goal is to avoid the deployment of schemes with potential security risks. 
In our discussions, the confidentiality of these schemes is not satisfied under practical parameter sets. 
Since these watermarking schemes are not yet deployed in real-world environments, our theoretical attacks did not cause any real-world privacy leaks or other negative effects. 
As a result, we propose CEC with provable confidentiality to prevent potential attacks and support the secure deployment of watermarking schemes.

\mypara{Responsible Deployment and Usage}
We emphasize that watermarking should only be applied in appropriate application scenarios with clearly defined purposes.
System providers should carefully evaluate the impact of watermarking on system behavior and ensure that its deployment aligns with the intended functionality of generative models.
More broadly, watermarking is not a complete solution to the misuse of generative models, and should be combined with other mechanisms for content governance and authenticity.

\section{Related Works}
\label{sect:related_works}
\subsection{Error-Tolerant Cryptography}
\label{sec:relatedwork_error}
The cryptographic schemes with error-tolerant capability are mainly divided into two categories.
\textbf{(1) Public-key cryptography} based on codes and lattices inherently involves noise during encryption, such as BIKE~\cite{aragon2022bike}, HQC~\cite{gaborit2025hamming}, and Kyber~\cite{bos2018crystals}.
However, the injected noise in these schemes is primarily used to ensure security, rather than to provide robustness against adversarial perturbations.
Moreover, the magnitude and distribution of the noise are tightly coupled with the underlying algebraic structures of codes or lattices, and must be carefully controlled to guarantee correct decryption.
As a result, the tolerable error rate is inherently limited.
For example, the error rates in BIKE and HQC are around $0.5\%$ and $0.4\%$, respectively, whereas the error rate under watermark removal attacks is on the order of $10\%$, which is significantly larger.
Therefore, such schemes offer limited robustness in applications like watermarking.
(2) \textbf{Secret-key cryptography} can also be combined with error-correcting codes.
~\cite{mathur2007design} combines AES encryption with Reed-Solomon codes to achieve secure and reliable communication over wireless channels.
However, their construction mainly focuses on integrating error-correcting capability with a deterministic block cipher, without explicitly considering the randomized encryption mechanisms required for IND-CPA security.
In particular, modern symmetric encryption schemes typically rely on a fresh nonce or initial vector for each encryption, which is treated as part of the ciphertext.
Such randomness is not incorporated into the design or analysis in~\cite{mathur2007design}.
Therefore, their construction does not capture the setting considered in our work, where both randomness and error tolerance must be jointly handled within the encryption process.

\subsection{Post-processing Watermarks}
Post-processing schemes embed watermarks into images after they are generated by GIMs~\cite{ShieldMntInvisibleWatermark,Tancik_2020_CVPR,gowal2025synthidimageimagewatermarkinginternet}.
Such schemes are generally implemented using one of two approaches: steganography‑based or encoder-decoder-based~\cite{pmlr-v235-an24a}.
DwtDct~\cite{ShieldMntInvisibleWatermark} is one of the most widely used steganography-based methods, which combines the Discrete Wavelet Transform (DWT) and the Discrete Cosine Transform (DCT) to encode the message into the frequency domain of the target image.
However, this technique often introduces visible artifacts and lacks provable robustness guarantees.
The encoder-decoder-based approach trains an encoder for watermark embedding and a decoder for watermark detection~\cite{Tancik_2020_CVPR,gowal2025synthidimageimagewatermarkinginternet}.
A representative example is SynthID-Image~\cite{gowal2025synthidimageimagewatermarkinginternet}, which is a production-level private watermark scheme developed by Google DeepMind.
While this paradigm can achieve a practical balance between image quality and watermark robustness, it lacks provable security and soundness guarantees.
Consequently, it remains susceptible to black‑box watermark removal attacks\footnote{\url{https://synthidbypass.com/}.}.

\section{Encoding Procedure of Concatenated Codes}
\label{sect:encode_concatenated}

The encoding procedure $\vec{v}=\mathcal{C}.\mathsf{Encode}(\vec{m})$ of the concatenated code is shown in Figure~\ref{fig:concatenated_code} and the decoding procedure follows analogously.

Let $q = 2^{k_i}$ and $f: \GF{q} \rightarrow \GF{2}^{k_i}$
be a fixed bijection that maps each element in $\GF{q}$ to its
$k_i$-bit representation over $\GF{2}$.
The message space of the concatenated code is $\GF{2}^{k_e k_i}$.
Given a binary message
$\vec{m} \in \GF{2}^{k_e k_i}$,
we partition it into $k_e$ blocks
$\vec{m} = (\vec{m}_1,\ldots,\vec{m}_{k_e})$, where $\vec{m}_j \in \GF{2}^{k_i}$.
Each block is mapped to a symbol in $\GF{q}$ via $f^{-1}$, yielding
\[
\vec{m}^{(q)} = \bigl(f^{-1}(\vec{m}_1),\ldots,f^{-1}(\vec{m}_{k_e})\bigr)
\in \GF{q}^{k_e}.
\]
The external code encodes $\vec{m}^{(q)}$ as
\[
\vec{c}^{(q)} = \mathcal{C}_e.\mathsf{Encode}(\vec{m}^{(q)})
= (c_1^{(q)},\ldots,c_{n_e}^{(q)}),
\qquad c_j^{(q)} \in \GF{q}.
\]
Next, each symbol $c_j^{(q)}$ is first mapped to $\GF{2}^{k_i}$ via $f$ to obtain
\[
\vec{c} =  (\vec{c}_1,\ldots,\vec{c}_{n_e}), \qquad 
\vec{c}_j = f(\vec{c}_j^{(q)}) \in \GF{2}^{k_i},
\]
and then encoded independently using the internal code,
\[
\vec{v}_j = \mathcal{C}_i.\mathsf{Encode}(\vec{c}_j) \in \GF{2}^{n_i}.
\]
Finally, the overall binary codeword is obtained by concatenation:
\[
\vec{v} = \mathcal{C}.\mathsf{Encode}(\vec{m})
= (\vec{v}_1, \ldots, \vec{v}_{n_e})
\in \GF{2}^{n_en_i}.
\]

\section{Games for the IND-CPA Proof}
\label{sect:ind_cpa_games}
This section presents the modified games used in the proof of Theorem~\ref{thm:ind_cpa}.

\refstepcounter{game}
\begin{tcolorbox}[
breakable,
enhanced jigsaw,
colback=white,
colframe=black,
coltitle=black,
colbacktitle=white,
title={Game~\thegame~$\mathsf{Game}_{f_1,\mathcal{B}_1}^{\mathsf{PRF}}(\lambda)$ (Distinguish $f_1$ from random)},
fonttitle=\bfseries,
boxrule=0.5pt,
boxsep=0pt,
top=0.5pt,
bottom=0.5pt,
left skip=3pt,
right skip=3pt,
]
\label{game:prf-b1}

\begin{enumerate}[leftmargin=1.2em,label=\arabic*.,nosep]
    \item $b \xleftarrow{\$} \{0,1\}$
    
    \item $\mathcal{O}_1(\cdot) :=
    \begin{cases}
        f_1(\vec{mk}_1) & \text{if } b = 0 \\
        R_1(\cdot)       & \text{if } b = 1
    \end{cases}$
    
    \item $\vec{mk}_2 \xleftarrow{\$} \GF{2}^{\lambda_1}$
    
    \item $\mathcal{O}(\vec{m})$:
    \begin{enumerate}
        \item $\vec{\theta} \xleftarrow{\$} \GF{2}^{\lambda_2}$
        \item $\vec{c}_1 \leftarrow \Encode_1(\vec{\theta}) + \mathcal{O}_1()$
        \item $\vec{c}_2 \leftarrow \Encode_2(\vec{m}) + f_2(\vec{mk}_2,\vec{\theta})$
        \item return $(\vec{c}_1,\vec{c}_2)$
    \end{enumerate}
    
    \item $(\vec{m}_0,\vec{m}_1) \leftarrow 
    \mathcal{A}_{\mathsf{CHOOSE}}^{\mathcal{O}}(1^\lambda)$
    
    \item $\beta \xleftarrow{\$} \{0,1\}$
    
    \item ${\theta}^* \xleftarrow{\$} \GF{2}^{\lambda_2}$
    
    \item $\vec{c}_1^* \leftarrow \Encode_1({\theta}^*) + \mathcal{O}_1()$
    
    \item $\vec{c}_2^* \leftarrow \Encode_2(\vec{m}_{\beta}) + f_2(\vec{mk}_2,{\theta}^*)$
    
    \item $\hat{\beta} \leftarrow 
    \mathcal{A}_{\mathsf{GUESS}}^{\mathcal{O}}(1^\lambda,(\vec{c}_1^*,\vec{c}_2^*))$

    \item $\mathcal{B}_1$ outputs $b' \leftarrow 1-(\hat{\beta} = \beta)$
    
    \item return $(b' = b)$
\end{enumerate}

\end{tcolorbox}

\refstepcounter{game}
\begin{tcolorbox}[
breakable,
enhanced jigsaw,
colback=white,
colframe=black,
coltitle=black,
colbacktitle=white,
title={Game~\thegame~$\mathsf{Game}_{f_2,\mathcal{B}_2}^{\mathsf{PRF}}(\lambda)$
 (Distinguish $f_2$ from random)},
fonttitle=\bfseries,
boxrule=0.5pt,
boxsep=0pt,
top=0.5pt,
bottom=0.5pt,
left skip=3pt,
right skip=3pt,
]
\label{game:prf-b2}

\begin{enumerate}[leftmargin=1.2em,label=\arabic*.,nosep]
    \item $b \xleftarrow{\$} \{0,1\}$
    
    \item $\mathcal{O}_2(\cdot) :=
    \begin{cases}
        f_2(\vec{mk}_2,\cdot) & \text{if } b = 0 \\
        R_2(\cdot)            & \text{if } b = 1
    \end{cases}$
    
    \item $\vec{r}_1 \xleftarrow{\$} \GF{2}^{n_1}$
    
    \item $\mathcal{D} \leftarrow \emptyset$
    
    \item $\mathcal{O}(\vec{m})$:
    \begin{enumerate}
        \item $\theta \xleftarrow{\$} \GF{2}^{\lambda_2}$
        \item $\vec{c}_1 \leftarrow \Encode_1(\theta) + \vec{r}_1$
        \item $\text{if } \theta\notin\mathcal{D}:\;
        \mathcal{D}[\theta] \leftarrow \mathcal{O}_2(\theta)$
        \item $\vec{s} \leftarrow \mathcal{D}[\theta]$
        \item $\vec{c}_2 \leftarrow \Encode_2(\vec{m}) + \vec{s}$
        \item return $(\vec{c}_1,\vec{c}_2)$
    \end{enumerate}
    
    \item $(\vec{m}_0,\vec{m}_1) \leftarrow 
    \mathcal{A}_{\mathsf{CHOOSE}}^{\mathcal{O}}(1^\lambda)$
    
    \item $\beta \xleftarrow{\$} \{0,1\}$
    
    \item $\theta^* \xleftarrow{\$} \GF{2}^{\lambda_2}$
    
    \item $\vec{c}_1^* \leftarrow \Encode_1(\theta^*) + \vec{r}_1$
    
    \item $\text{if } \theta^*\notin\mathcal{D}:\;
    \mathcal{D}[\theta^*] \leftarrow \mathcal{O}_2(\theta^*)$
    
    \item $\vec{s}^* \leftarrow \mathcal{D}[\theta^*]$
    
    \item $\vec{c}_2^* \leftarrow \Encode_2(\vec{m}_\beta) + \vec{s}^*$
    
    \item $\hat{\beta} \leftarrow 
    \mathcal{A}_{\mathsf{GUESS}}^{\mathcal{O}}(1^\lambda,(\vec{c}_1^*,\vec{c}_2^*))$
    
    \item $\mathcal{B}_2$ outputs $b' \leftarrow 1-(\hat{\beta} = \beta)$
    
    \item return $(b' = b)$
\end{enumerate}

\end{tcolorbox}

\refstepcounter{game}
\begin{tcolorbox}[
breakable,
enhanced jigsaw,
colback=white,
colframe=black,
coltitle=black,
colbacktitle=white,
title={Game~\thegame~$\mathsf{Game}_{\mathcal{A}}^{\mathsf{IND\text{-}CPA}}(\lambda)$ (Replace $f_1$ with randomness)},
fonttitle=\bfseries,
boxrule=0.5pt,
boxsep=0pt,
top=0.5pt,
bottom=0.5pt,
left skip=3pt,
right skip=3pt,
]
\label{game:cec-game1}

\begin{enumerate}[leftmargin=1.2em,label=\arabic*.,nosep]
    \item \textcolor{blue}{$\vec{mk}_2 \xleftarrow{\$} \GF{2}^{\lambda_1}$}
    
    \item \textcolor{blue}{$\vec{r}_1 \xleftarrow{\$} \GF{2}^{n_1}$}
    
    \item $\mathcal{O}(\vec{m})$:
    \begin{enumerate}
        \item $\vec{\theta} \xleftarrow{\$} \GF{2}^{\lambda_2}$
        \item \textcolor{blue}{$\vec{c}_1 \leftarrow \Encode_1(\vec{\theta}) + \vec{r}_1$}
        \item $\vec{c}_2 \leftarrow \Encode_2(\vec{m}) + f_2(\vec{mk}_2,\vec{\theta})$
        \item return $(\vec{c}_1,\vec{c}_2)$
    \end{enumerate}
    
    \item $(\vec{m}_0,\vec{m}_1) \leftarrow 
    \mathcal{A}_{\mathsf{CHOOSE}}^{\mathcal{O}}(1^\lambda)$
    
    \item $b \xleftarrow{\$} \{0,1\}$
    
    \item $\vec{\theta}^* \xleftarrow{\$} \GF{2}^{\lambda_2}$
    
    \item \textcolor{blue}{$\vec{c}_1^* \leftarrow \Encode_1(\vec{\theta}^*) + \vec{r}_1$}
    
    \item $\vec{c}_2^* \leftarrow \Encode_2(\vec{m}_b) + f_2(\vec{mk}_2,\vec{\theta}^*)$
    
    \item $b' \leftarrow 
    \mathcal{A}_{\mathsf{GUESS}}^{\mathcal{O}}(1^\lambda,(\vec{c}_1^*,\vec{c}_2^*))$
    
    \item return $(b' = b)$
\end{enumerate}

\end{tcolorbox}

\refstepcounter{game}
\begin{tcolorbox}[
breakable,
enhanced jigsaw,
colback=white,
colframe=black,
coltitle=black,
colbacktitle=white,
title={Game~\thegame~$\mathsf{Game}_{\mathcal{A}}^{\mathsf{IND\text{-}CPA}}(\lambda)$ (Replace $f_2$ with randomness)},
fonttitle=\bfseries,
boxrule=0.5pt,
boxsep=0pt,
top=0.5pt,
bottom=0.5pt,
left skip=3pt,
right skip=3pt,
]
\label{game:cec-game2}
\begin{enumerate}[leftmargin=1.2em,label=\arabic*.,nosep]
    \item \textcolor{red}{\sout{$\vec{mk}_2 \xleftarrow{\$} \GF{2}^{\lambda_1}$}}
    
    \item $\vec{r}_1 \xleftarrow{\$} \GF{2}^{n_1}$
    
    \item \textcolor{blue}{$\mathcal{D} \leftarrow \emptyset$}
    
    \item $\mathcal{O}(\vec{m})$:
    \begin{enumerate}
        \item ${\theta} \xleftarrow{\$} \GF{2}^{\lambda_2}$
        \item $\vec{c}_1 \leftarrow \Encode_1({\theta}) + \vec{r}_1$
        \item \textcolor{blue}{
$\text{if } {\theta}\notin\mathcal{D}:\;
\mathcal{D}[\vec{\theta}] \xleftarrow{\$} \GF{2}^{n_2}$}
        \item \textcolor{blue}{$\vec{s} \leftarrow \mathcal{D}[\vec{\theta}]$}
        \item $ \textcolor{blue}{\vec{c}_2 \leftarrow \Encode_2(\vec{m}) +\vec{s}}$
        \item return $(\vec{c}_1,\vec{c}_2)$
    \end{enumerate}
    
    \item $(\vec{m}_0,\vec{m}_1) \leftarrow 
    \mathcal{A}_{\mathsf{CHOOSE}}^{\mathcal{O}}(1^\lambda)$
    
    \item $b \xleftarrow{\$} \{0,1\}$
    
    \item ${\theta}^* \xleftarrow{\$} \GF{2}^{\lambda_2}$
    
    \item $\vec{c}_1^* \leftarrow \Encode_1({\theta}^*) + \vec{r}_1$
    
   \item \textcolor{blue}{
$\text{if } \vec{\theta^*}\notin\mathcal{D}:\;
\mathcal{D}[{\theta^*}] \xleftarrow{\$} \GF{2}^{n_2}$}
        \item \textcolor{blue}{$\vec{s}^* \leftarrow \mathcal{D}[{\theta}^*]$}
    
    \item $\textcolor{blue}{\vec{c}_2^* \leftarrow \Encode_2(\vec{m}_b) + \vec{s}^*}$
    
    \item $b' \leftarrow 
    \mathcal{A}_{\mathsf{GUESS}}^{\mathcal{O}}(1^\lambda,(\vec{c}_1^*,\vec{c}_2^*))$
    
    \item return $(b' = b)$
\end{enumerate}

\end{tcolorbox}

\begin{figure}[]
\centering
\includegraphics[width=0.9\linewidth]{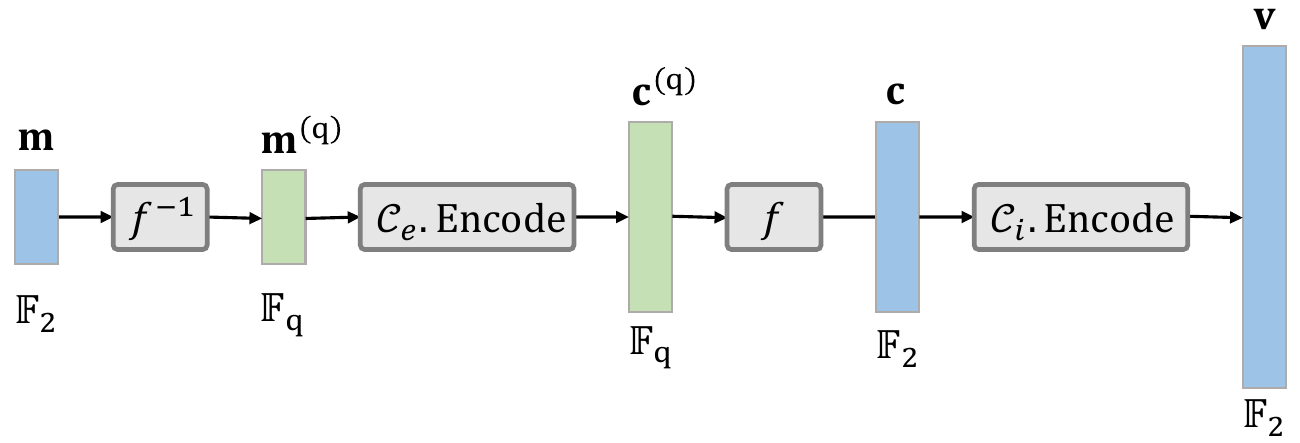}
\caption{Encoding procedure of the concatenated codes.}
\label{fig:concatenated_code}
\end{figure}

\section{Details of Baseline Watermarking Schemes}
\label{sect:baseline-details}
For GS, the payload is encoded using a duplicated code with $(1,8,8)$ repetitions along the channel, width, and height dimensions, respectively.
We use the same-key-same-nonce setting across all generations to simulate a real-world key management scenario~\cite{thietke2025towards}.
For PRCMark, the payload is encoded using the LDPC-PRC with parameter $t=3$ and a theoretical false positive rate of $10^{-5}$.
Decoding is performed using the soft-decoding technique~\cite{DBLP:conf/iclr/GunnZS25} with $\sigma=1.5$.
For T2SMark, a 16-bit session key is embedded in the first channel, while the payload is embedded across the remaining three channels. 
Each component is encoded with a duplicated code, ensuring 50\% occupancy within its designated channels.
Note that the detection and decoding mechanisms of T2SMark are independent.
We evaluate its robustness by directly decoding the message without reference to the detection result, and evaluate its soundness by empirically setting the detection threshold to 300, which produces zero false positives under incorrect keys.
These settings advantage T2SMark, yielding an overestimated upper bound in these metrics.

\section{Details of Watermark Robustness}
\label{sect:pert}

\mypara{Implementation of Image Transformations}
\begin{figure}[tbp]
\centering
\includegraphics[width=1\linewidth]{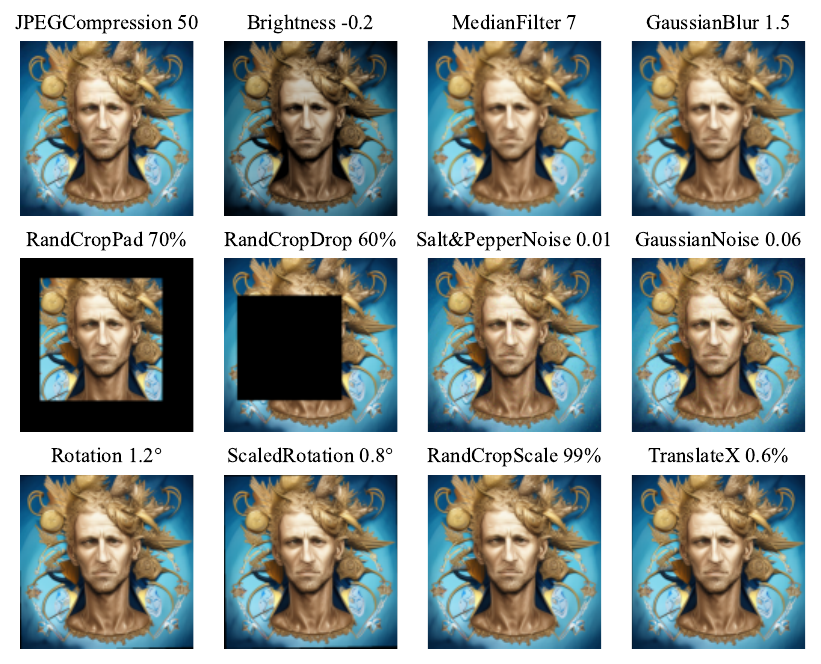}
\caption{Example images under different perturbations.}
\label{fig:pert-image-examples}
\end{figure}
We adopt 12 image transformations as shown in~\Cref{fig:pert-image-examples}, which can be roughly divided into four categories: global transformations, masking, noise, and geometric transformations. 
Their implementations are detailed as follows:

\begin{figure*}[t]
    \centering
    \includegraphics[width=1\linewidth]{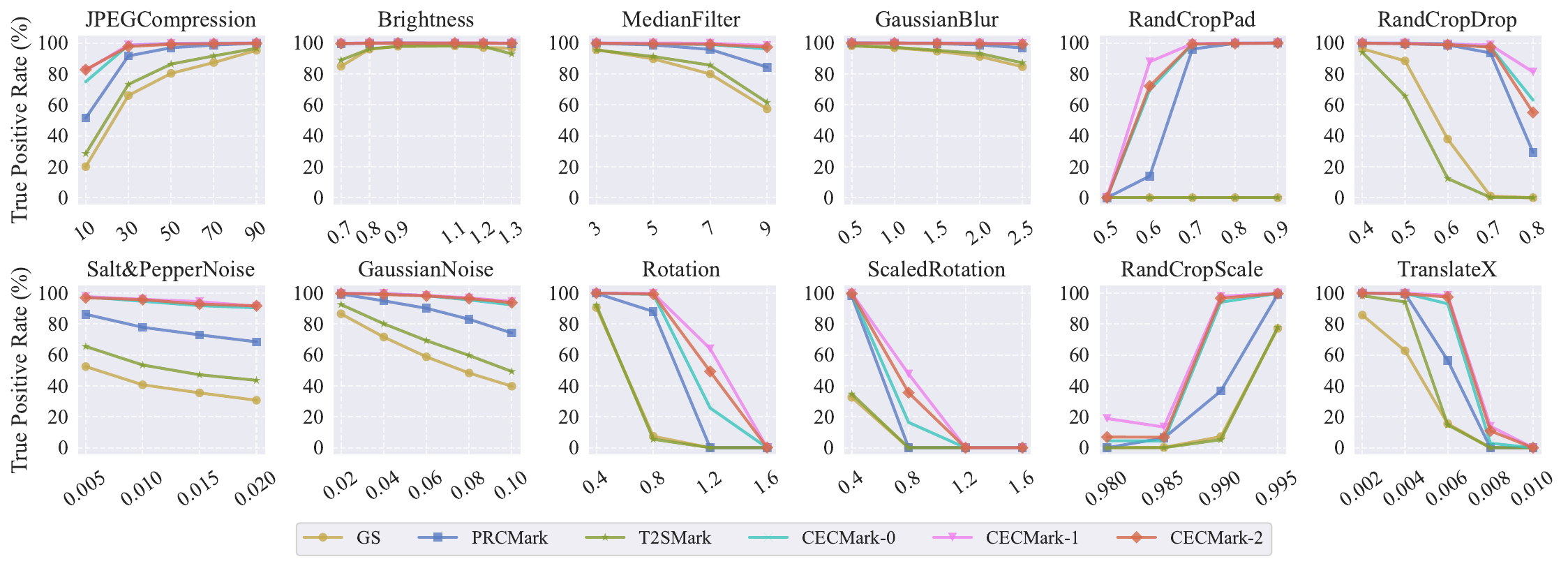}
    \caption{The True Positive Rate (TPR) of various watermark schemes under transformations of different strengths.}
    \label{fig:pert-image-tpr}
\end{figure*}

\begin{figure}[htb]
\centering
\includegraphics[width=0.9\linewidth]{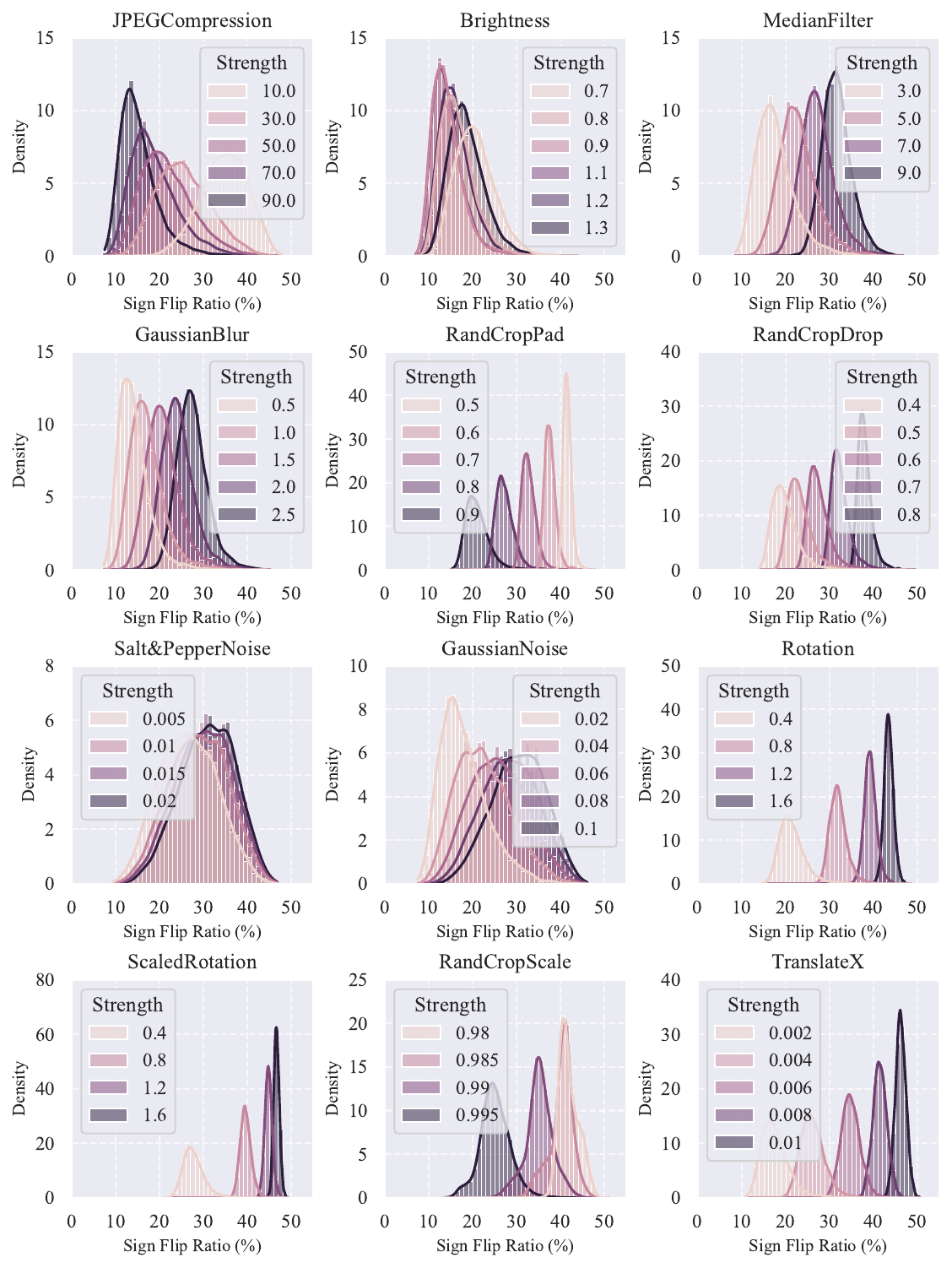}
\caption{Ratio of latent sign flips induced by different image perturbations.}

\label{fig:pert-sign-flip}
\end{figure}

\begin{itemize}
    \item \textbf{Global Transformations}:
\begin{itemize}
    \item \textbf{JPEGCompression}: Apply JPEG compression with quality factor $\alpha$, where a higher $\alpha$ indicates better quality preservation.
    \item \textbf{Brightness}: Adjust image brightness globally. Brighter for $\alpha>1$ and darker for $\alpha<1$.
    \item \textbf{MedianFilter}: Apply a median filter with kernel size $\alpha$, where a larger $\alpha$ results in stronger smoothing and loss of fine details.
    \item \textbf{GaussianBlur}: Apply Gaussian blur with standard deviation $\alpha$, where a larger $\alpha$ produces stronger blurring effects.
\end{itemize}
\item \textbf{Masking}:
\begin{itemize}
    \item \textbf{RandCropPad}: Randomly crop a square covering $\alpha\%$ of pixels, retain the cropped region, and pad the remaining area with zeros. A larger $\alpha$ preserves more of the original image.
    \item \textbf{RandCropDrop}: Randomly crop a square covering $\alpha\%$ of pixels, and set the cropped region to zeros. A larger $\alpha$ drops more content from the image.
\end{itemize}
\item \textbf{Noise}:
\begin{itemize}
    \item \textbf{Salt\&PepperNoise}: Randomly replace $\alpha(\%)$ of pixels with either maximum or minimum intensity values. A larger $\alpha$ introduces more impulsive noise.
    \item \textbf{GaussianNoise}: Add Gaussian noise with variance $\alpha$ to the image. A larger $\alpha$ corresponds to stronger noise levels.
\end{itemize}
\item \textbf{Geometric Transformations}:
\begin{itemize}
    \item \textbf{Rotation}: Rotate the image by $\alpha^\circ$, discarding pixels that fall outside the bounding box.
    \item \textbf{ScaledRotation}: Rotate the image by $\alpha^\circ$ and scale it down to fit the bounding box.
    \item \textbf{RandCropScale}: Randomly crop a square covering $\alpha\%$ of pixels, then scale the cropped region to the full image size.
    \item \textbf{TranslateX}: Translate the image horizontally by $\alpha\%$ of pixels.
\end{itemize}
\end{itemize}

\mypara{Per-perturbation Results}
The per-perturbation TPRs of each watermark scheme are presented in~\Cref{fig:pert-image-tpr}. Across all perturbations, CECMark consistently achieves higher robustness than existing watermark schemes, showing its strong robustness.
Nevertheless, it still fails under certain transformations.
To analyze these failures, we examine the ratios of latent sign flips induced by different perturbations in~\Cref{fig:pert-sign-flip}.
The results indicate that these perturbations generate nearly random latent flips, which exceed the decoding boundary of CECMark.
In fact, such strong perturbations invalidate nearly any error-correcting scheme~\cite{cryptoeprint:2025/1620prclimits}, representing an inherent limitation of latent-based watermarking that CECMark cannot fully resolve.


\section{Efficiency of Different Watermarking Schemes}
\label{sec:throughput}
\begin{table}[t]
\centering
\caption{The throughput of different watermark schemes for watermark encoding and decoding (iterations per second, it/s).}
\label{tab:speed}
\setlength{\tabcolsep}{10pt}
\begin{tabular}{crr}
\toprule
Schemes & Encode ($\uparrow$) & Decode ($\uparrow$) \\ \midrule
GS & 254.85 & 998.14 \\
PRCMark & 3.40 & 3.58 \\
T2SMark & 483.32 & 839.08 \\ \midrule
RSSMark-0 & 250.16 & 7.79 \\
RSSMark-1 & 243.37 & 10.46 \\
RSSMark-2 & 247.02 & 10.18 \\ \bottomrule
\end{tabular}
\end{table}

The encoding and decoding throughputs (iterations per second, it/s) of different watermarking schemes are reported in Table~\ref{tab:speed}.
The evaluation is conducted by performing 1,000 encoding and decoding operations implemented in Python on a single process of an Intel Xeon Platinum 8558P processor.

For encoding, RSSMark achieves 243--250 it/s, with only marginal differences across variants.
The throughput of RSSMark is comparable to GS, which only requires ChaCha20 bitstream generation.
Although RSSMark additionally performs RM-RS encoding, this operation is linear-time and only introduces negligible overhead.
T2SMark achieves the highest throughput because it uses Python's built-in pseudorandom number generator instead of the cryptographically secure ChaCha20 stream.
In contrast, PRCMark achieves only 3.40 it/s due to its expensive matrix operations.

For decoding, RSSMark achieves a low throughput of 7.79--10.46 it/s.
This is because RSSMark performs computationally intensive RM-RS decoding.
By comparison, GS and T2SMark employ much simpler decoding procedures and therefore achieve significantly higher throughput.
Nevertheless, RSSMark still outperforms PRCMark, whose Belief Propagation decoder incurs even higher computational complexity.

Although CECMark exhibits asymmetric encoding and decoding throughput, this is practical because watermark encoding is performed far more frequently than watermark decoding in real-world deployments. Moreover, compared with image generation and latent inversion, which achieve only 0.56 it/s and 0.85 it/s on a single NVIDIA H20 GPU, respectively, watermark encoding and decoding are not the bottleneck of the overall pipeline.

\section{Illustration of Watermark Embedding and Extraction}
\label{sect:cecmark_embed_extract}
The overall process from the initial latent to the final image is shown in \Cref{fig:watermark_flow}. We further illustrate the specific embedding and extraction procedures between the message and the watermarked latent in the following discussion.

\mypara{Embedding}
In Figure~\ref{fig:watermark_embedding}, the latent vector is partitioned according to the magnitudes of its coordinates. 
The green and orange regions correspond to the selected coordinates used for watermark embedding, while the blue region contains the remaining coordinates that are not modified. 
The selected coordinates are mapped to the first part of the channel, where their signs are adjusted according to the watermark bits. 
Afterward, a global permutation is applied to obtain the final watermarked latent.

\mypara{Extraction}
Figure~\ref{fig:watermark_extraction} illustrates the reverse process. 
The latent is first inverse-permuted to restore the canonical ordering. 
The watermark bits are then extracted from the signs of the coordinates in the first part of the channel.


\begin{figure}[]
\centering
\includegraphics[width=1\linewidth]{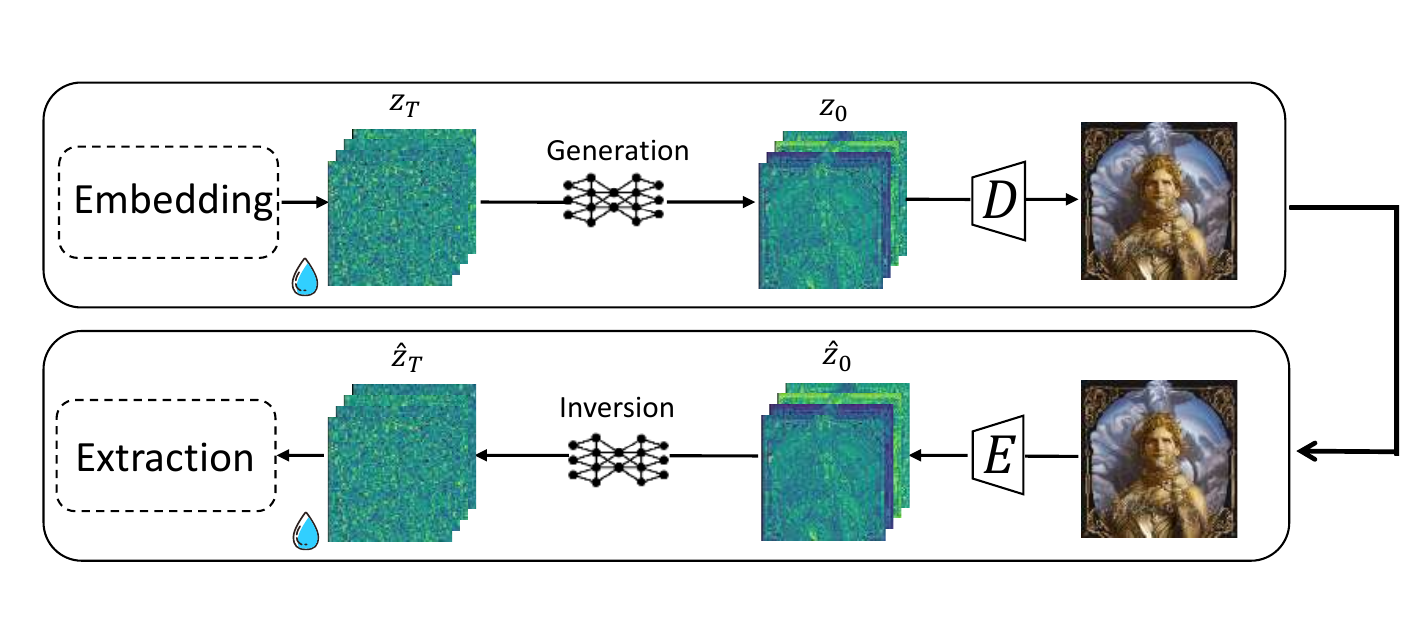}
\caption{Overview of the generation and inversion processes between the initial latent and the image. The latent consists of four channels. Generation and inversion are implemented using \texttt{U-Net}, while \texttt{D-Net} ($\mathcal{D}$) and \texttt{E-Net} ($\mathcal{E}$) denote the decoder and encoder networks, respectively.}
\label{fig:watermark_flow}
\end{figure}

\begin{figure}[t]
  \centering
  \includegraphics[width=0.85\linewidth]{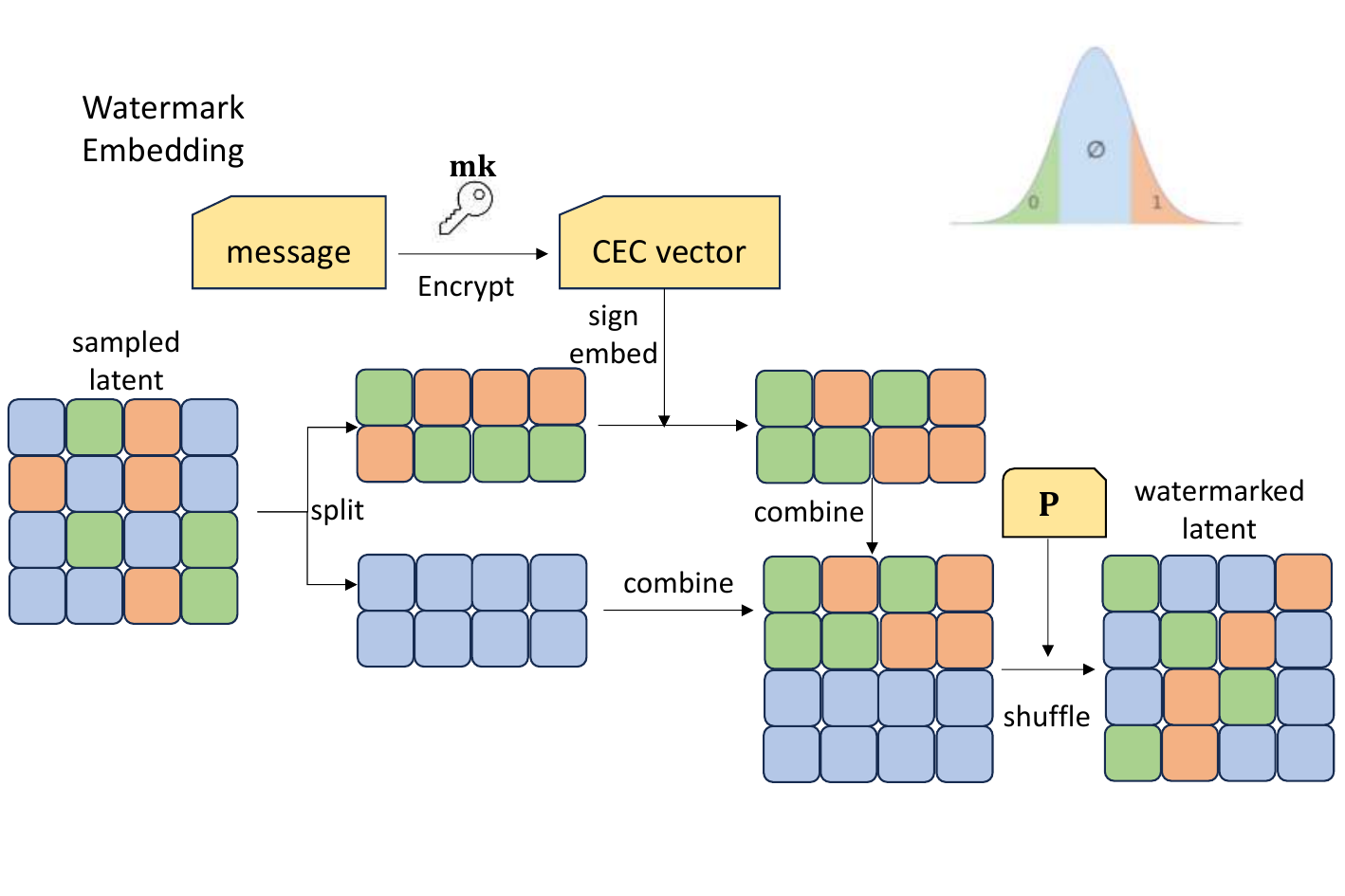}
  \caption{Watermark embedding of CECMark. The green, blue, and orange areas represent positions with large negative, near-zero, and large positive values, respectively. The watermark is only embedded into regions with high magnitude, i.e., the green and orange areas.
  }
  \label{fig:watermark_embedding}
\end{figure}

\begin{figure}[t]
  \centering
  \includegraphics[width=0.85\linewidth]{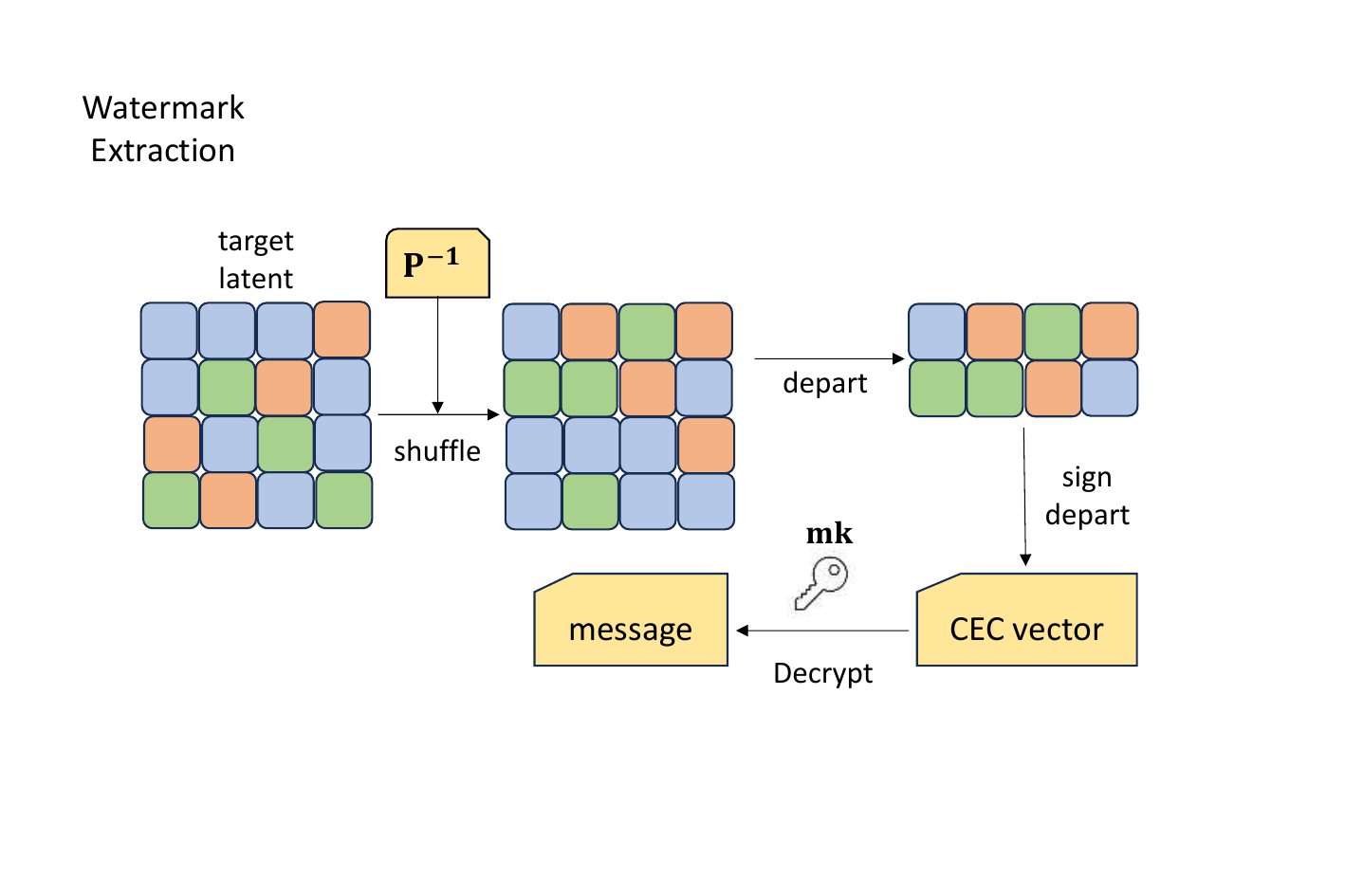}
  \caption{Watermark extraction of CECMark. After inverting the shuffle, the watermark bits are directly extracted as the signs of the first $n$ positions. }
  \label{fig:watermark_extraction}
\end{figure}

\section{Example Images of Different Schemes}
\label{sect:example}

\begin{figure}[H]
\centering
\includegraphics[width=1\linewidth]{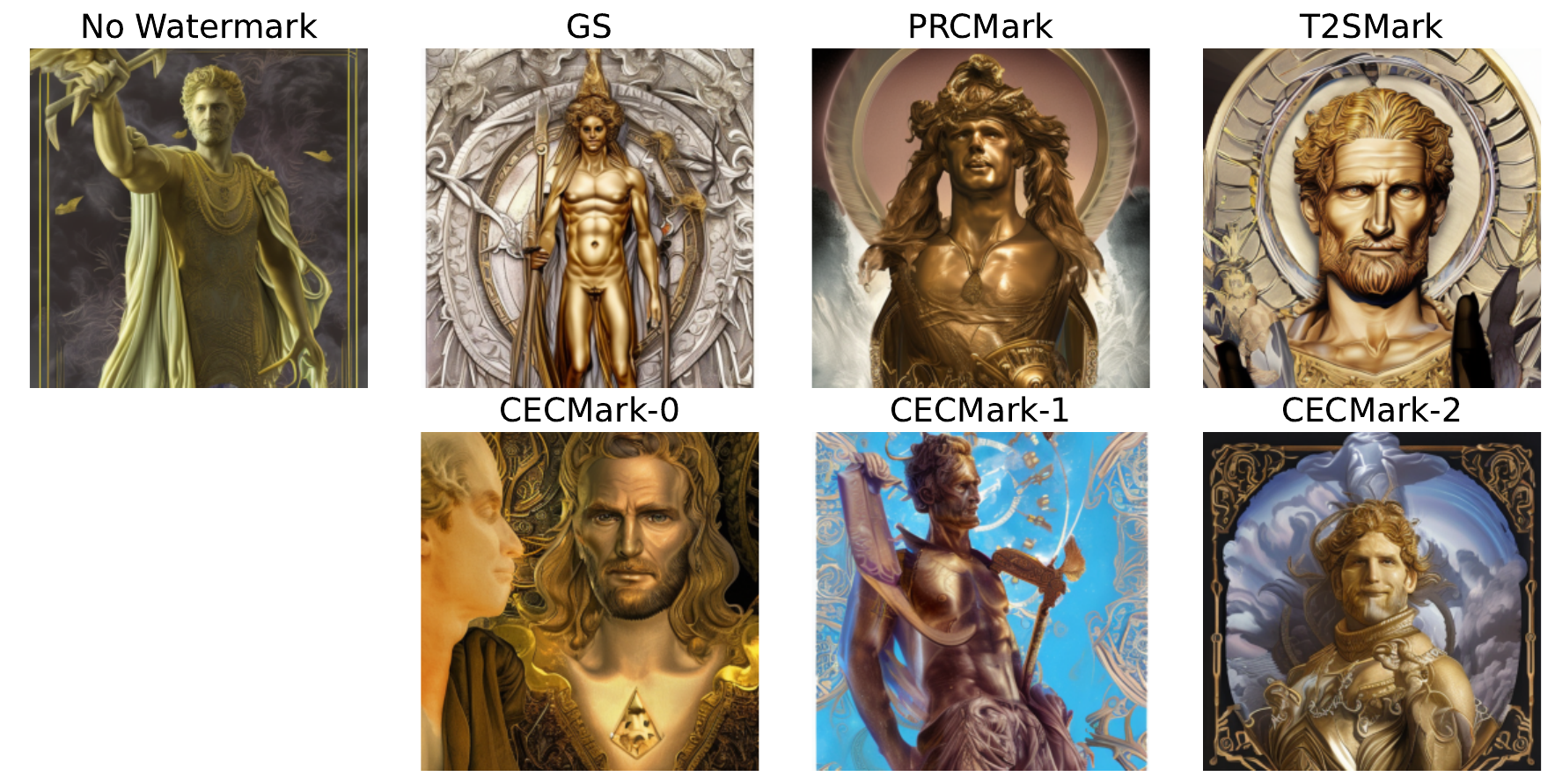}
\caption{Images generated with prompt ``\textit{ultra realistic illustration, a golden statue of a herculean glenn howerton as the god apollo, intricate, elegant, highly detailed, digital painting, artstation, concept art, smooth, sharp focus, illustration, art by artgerm and greg rutkowski and alphonse mucha}'' under different schemes.}
\label{fig:image-examples}
\end{figure}

\section{Details of Empirical Distinguishability}
\label{sect:hyper}
\begin{figure}[t]
    \centering
    \includegraphics[width=1\linewidth]{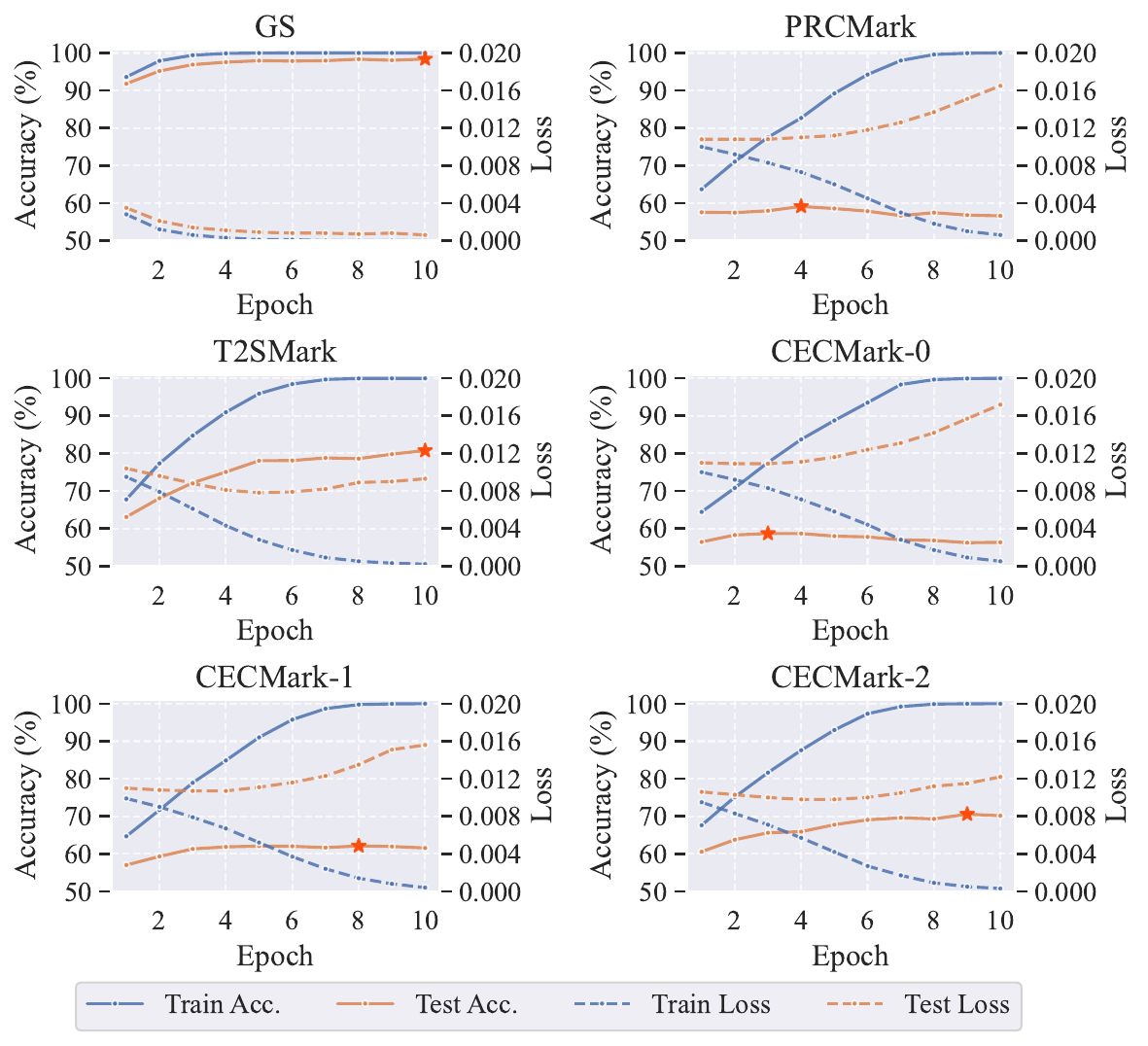}
    \caption{The loss and accuracy curves for ResNet-18 classifiers trained to distinguish watermarked from unwatermarked images.}
    \label{fig:distinguish}
\end{figure}

\mypara{Experiment Settings}
For each of the 1,000 prompts, we generate 9 watermarked and 9 unwatermarked images for training, and 1 watermarked and 1 unwatermarked image for testing, all using the same key.
This results in 18,000 training samples and 2,000 testing samples.
Each model is trained from the pre-trained ResNet-18 model for 10 epochs with a batch size of 64, using the Adam optimizer at a learning rate of $10^{-5}$.

\mypara{Experiment Result}
The loss and accuracy on the training and testing datasets for each scheme are shown in~\Cref{fig:distinguish}.
GS watermarked images can be distinguished with near-perfect accuracy, and T2SMark also reaches over 80\% distinguishability, indicating these schemes have limited confidentiality.
In contrast, CECMark-0 achieves the lowest distinction accuracy, which remains below 60\%.
CECMark-1 is slightly more distinguishable than PRCMark but still remains around 60\%.
However, CECMark-2 exhibits only moderate confidentiality, with distinguishability rising to about 70\%.
This increase in distinguishability among the CECMark variants may be attributed to the mixture of the \cecinst code and true randomness, which introduces a larger distributional shift and makes the watermark more distinguishable.

\section{Instantiability of PRC}
\label{sect:inst_prc}
In this part, we present a comprehensive analysis of the limitations of all currently known PRC instances.
We divide the current PRCs into two categories: secret-key PRCs and public-key PRCs.

\subsection{Secret-key PRC}
The first category is the secret-key version, including PRF-based PRCs~\cite{moitra2024edit,ghentiyala2024new} and spaced-bounded PRCs in~\cite{ghentiyala2024new}.
\begin{itemize}
\item \textbf{Local PRFs-Based PRC~\cite{moitra2024edit}.}  The core idea of this instance is to embed a pattern that links the input and output of a PRF, specifically $(\vec{x}\,\|\,F(\vec{x})+\vec{e})$, where $F:\{0,1\}^{n(\lambda)} \rightarrow \{0,1\}$ is a local PRF depending only on $\Omega(\log(n))$ bits of the input, and $\vec{e}$ represents a Bernoulli error.
However, this locality introduces a significant vulnerability to statistical analysis. An adversary against undetectability can collect $O(n^2)$ samples $\{\vec{y}^{(j)}\}$ and examine the joint distribution of coordinate pairs $(\vec{y}^{(j)}_{\alpha}, \vec{y}^{(j)}_{\beta})$ for all indices $\alpha, \beta$.
While these pairs are statistically independent in a truly uniform distribution, the watermarked distribution exhibits a detectable bias. Specifically, if $\vec{y}_{\alpha}$ corresponds to an input bit of $F$ and $\vec{y}_{\beta}$ to its output, the conditional probability satisfies:
\[
\Pr[\vec{y}_{\beta}^{(j)} = \vec{y}_{\beta}^{(k)} \mid \vec{y}_{\alpha}^{(j)} = \vec{y}_{\alpha}^{(k)}] 
\ge \frac{1}{2} + \frac{1}{2^{\Omega(\log n)}} 
= \frac{1}{2} + \frac{1}{\Omega(n)}
\]
where the bias $\frac{1}{\Omega(n)}$ arises whenever the $\Omega(\log n)$ bits influencing $F$ happen to coincide. This indicates that while local PRFs offer efficiency, they fail to provide robust undetectability against polynomial-time distinguishers.

\item \textbf{PRF-Based PRC~\cite{ghentiyala2024new}.} A different approach is taken in~\cite{ghentiyala2024new}, which proposes a zero-bit construction utilizing a PRF with a larger input domain of $O(n)$ instead of $\Omega(\log n)$ to mitigate such local correlations.
Although this design achieves stronger undetectability, the scheme can only tolerate \textit{sub-constant error rates}, rendering it impractical for noisy real-world channels.

\item \textbf{Space-bounded PRC~\cite{ghentiyala2024new}.}
In Section 4 of~\cite{ghentiyala2024new}, the authors propose a space-bounded PRC of the form $(\vec{a}\,\|\, \vec{a} \cdot \vec{s}_1+\vec{e}_1 \| \cdots \| \vec{a} \cdot \vec{s}_k+\vec{e}_k)$, where $\{\vec{s}_i\}$ are sparse vectors and $\{\vec{e}_i\}$ are Bernoulli errors.
Due to the sparsity of $\{\vec{s}_i\}$, this construction remains vulnerable to statistical analysis.
Accordingly, the authors only claim undetectability against adversaries with $O(n^{1.5})$ space complexity, which limits its practical security guarantees.

\end{itemize}

\subsection{Public-key PRC}
The second category comprises public-key constructions that typically rely on hardness assumptions~\cite{christ2024pseudorandom,DBLP:conf/iclr/GunnZS25,ghentiyala2024new,christ2025improved}.

\begin{itemize}
    \item \textbf{Planted XOR-Based PRC~\cite{christ2024pseudorandom,DBLP:conf/iclr/GunnZS25,ghentiyala2024new}.}
    Schemes in~\cite{christ2024pseudorandom,DBLP:conf/iclr/GunnZS25} and Section~5 of~\cite{ghentiyala2024new} are based on the planted XOR problem.
    These constructions are known to be vulnerable to quasi-polynomial-time attacks.

    \item \textbf{Planted Hyper-loop-Based PRC~\cite{ghentiyala2024new}.}
    The construction in Section~4 of~\cite{ghentiyala2024new} relies on the planted hyper-loop problem, which is also susceptible to quasi-polynomial-time attacks.

    \item \textbf{Reed-Solomon-Based PRC~\cite{christ2025improved}.}
    Recent attempts in~\cite{christ2025improved} employ Reed-Solomon codes to construct PRCs, but exhibit significant theoretical and practical shortcomings. Specifically, the construction in Theorem 5.6 is limited to deletion channels, while the construction in Theorem 5.7 requires a decoder with exponential-time complexity, making it computationally infeasible. 
    As a result, these constructions do not fully meet the PRC requirements in their current form.
\end{itemize}


Crucially, most of these works lack concrete parameter instantiations, remaining purely theoretical. 
Even for the scheme in~\cite{DBLP:conf/iclr/GunnZS25}, which does provide parameters, recent cryptanalysis in~\cite{wang2025cryptanalysis} demonstrates that its undetectability can be compromised by a polynomial-time adversary. 
Consequently, existing PRCs remain far from practical deployment.

\section{Attacks against Confidentiality of PRCMark}
\label{sect:prc_attack}
We further adopt Attack-I in~\cite{wang2025cryptanalysis} to break the confidentiality of PRCMark. 
Specifically, we first recover $l$ rows of the secret key $\mat{P}$ using Attack-I, and construct a partial secret key $\mat{P}' \in \GF{2}^{l \times n}$. 
Then, we apply the $\Decode()$ algorithm exactly as in~\cite{DBLP:conf/iclr/GunnZS25} to decode the message.

To verify the effectiveness of our attack, we apply PRCMark with $t=3$ on stable-diffusion-2-1-base~\cite{Rombach_2022_CVPR}. 
We randomly generate $100$ pairs of $\vec{pk}=(\mat{G},\vec{z})$. 
For each $\vec{pk}$, we generate $100$ images. Given each public key $\vec{pk}$, we recover a partial secret key $\mat{P}'$ by setting $l=10{,}000$. 
The results show that we successfully recover $97.99\%$ of the messages, indicating that the confidentiality of PRCMark is not satisfied.

\end{document}